\tikzset{>=spaced stealth'}
\newcommand*{\T}{^{\mkern-1.5mu\mathsf{T}}}
\newcommand{\bin}{\text{bin}}
\newcommand{\eps}{\varepsilon}
\newcommand{\norm}[1]{\lVert #1\rVert}
\DeclareMathOperator{\poly}{poly}
\newcommand{\defproblem}[3]{
	\begin{center}
		\noindent\fbox{
		\begin{minipage}{0.95\textwidth}
			#1 \\
			{\bf{Input:}} #2  \\
			{\bf{Task:}} #3
		\end{minipage}
	}
	\end{center}
}
\newtheorem{theorem}{Theorem}
\newtheorem{lemma}[theorem]{Lemma}
\newtheorem{hypothesis}[theorem]{Hypothesis}
\newtheorem{claim}[theorem]{Claim}
\title{Fine-Grained Equivalence for Problems Related to Integer Linear Programming}
\author{Lars Rohwedder\footnote{Maastricht University, Maastricht, Netherlands. Supported by Dutch Research Council (NWO) project “The Twilight
Zone of Efficiency: Optimality of Quasi-Polynomial Time Algorithms” [grant number OCEN.W.21.268]} \and Karol
W\k{e}grzycki\footnote{Saarland University and Max Planck Institute for Informatics,
        Saarbr\"ucken, Germany. 
    This work is part of the project TIPEA that has
    received funding from the European Research Council (ERC) under the European Unions Horizon
2020 research and innovation programme (grant agreement No. 850979).}
}
\date{}
\begin{document}
\maketitle
\begin{abstract}
Integer Linear Programming with $n$ binary variables and $m$ many $0/1$-constraints can be solved in time $2^{\tilde O(m^2)} \text{poly}(n)$ and it is open whether the dependence on $m$ is optimal. 
Several seemingly unrelated problems, which include variants of Closest String, Discrepancy Minimization, Set Cover, and Set Packing, can be modelled as Integer Linear Programming with $0/1$ constraints to obtain algorithms with the same running time for a natural parameter $m$ in each of the problems.
Our main result establishes through fine-grained reductions that these problems are equivalent, meaning that a $2^{O(m^{2-\varepsilon})} \text{poly}(n)$ algorithm with $\varepsilon > 0$ for one of them implies such an algorithm for all of them.

In the setting above, one can alternatively obtain an $n^{O(m)}$ time algorithm for Integer Linear Programming using a straightforward dynamic programming approach, which can be more efficient if $n$ is relatively small (e.g., subexponential in $m$).
We show that this can be improved to ${n'}^{O(m)} + O(nm)$, where $n'$ is the number of distinct (i.e., non-symmetric) variables.
This dominates both of the aforementioned running times.

\end{abstract}

\section{Introduction}
The study of parameterized complexity for Integer Linear Programming has a long history:
classical works
by Lenstra~\cite{lenstra1983integer} and Kannan~\cite{kannan1987minkowski} and very recently Rothvoss and Reis~\cite{reis2023subspace} provide FPT algorithms in the number of variables $n$ of an ILP of the form
$Ax \le b, \ x\in \mathbb Z^n$.
In an orthogonal line of research,
Papadimitriou~\cite{papadimitriou1981complexity} gave an FPT algorithm in the number of constraints and the size of the coefficients of $A$ and $b$ for an ILP in standard form $Ax = b, \ x\in \mathbb Z^n_{\ge 0}$.
Interest in the second line of work has been renewed by the improved algorithms due to Eisenbrand, Weismantel~\cite{eisenbrand2019proximity} and Jansen, Rohwedder~\cite{jansen2023integer}, which give essentially optimal running times $(m\Delta)^{O(m)} \poly(n)$ due to a conditional lower bound based
on the Exponential Time Hypothesis (ETH)~\cite{knop2020tight}. Here, $\Delta$ is the maximum absolute size of an entry in $A$.
The work by Eisenbrand and Weismantel also considers a version where variables are subject to box-constraints,
which will be the primary focus of this work, see definition below.
\defproblem{Integer Linear Programming}{
	Constraint matrix $A\in \{-\Delta,\dotsc,\Delta\}^{m\times n}$, right-hand side $b\in \mathbb Z^m$, variable bounds $\ell, u\in \mathbb Z_{\ge 0}^n$.
	}{
		Find $x\in \mathbb Z^n$ with
		\begin{align*}
			A x &= b \\
			\ell_i \le &\ x_i \le u_i &i=1,2,\dotsc,n.
		\end{align*}
	}

We refer to the variant where $\ell = (0,\dotsc,0)\T$ and $u = (1,\dotsc,1)\T$ as
\emph{binary} \textsc{Integer Linear Programming}.
\defproblem{Binary Integer Linear Programming}{
	Constraint matrix $A\in \{-\Delta,\dotsc,\Delta\}^{m\times n}$, right-hand side $b\in \mathbb Z^m$.
	}{
        Find $x\in \mathbb \{0,1\}^n$ with
		\begin{align*}
			A x &= b.
		\end{align*}
	}
The running times obtained in~\cite{eisenbrand2019proximity} for either variant is $(m\Delta)^{O(m^2)}\poly(n)$. Also for matrices with only $0/1$ coefficients nothing better than $2^{\tilde O(m^2)}\poly(n)$ is known.
It is an intriguing question whether the slightly unusual exponent of $\tilde O(m^2)$ is necessary,
which is in the spirit of \emph{fine-grained complexity}.
Since the dominant complexity-theoretic assumption of P$\neq$NP is not powerful enough
to show precise lower bounds on running times, the field of fine-grained complexity
is concerned with finding lower bounds via stronger
conjectures, see e.g., ~\cite{bringmann2019fine,vassilevska2015hardness,cygan2016problems}.
A number of such conjectures exist by now, often with interesting connections between
them. Even if one doubts these conjectures, the reductions still 
provide insights into how problems relate to each other.

Based on existing conjectures, the best lower bound known on the exponent
of \textsc{Integer Linear Programming}
is $\Omega(m\log m)$ from the easier unbounded setting~\cite{knop2020tight},
which is of course not tight in this setting.
In this paper, we take another path: we
assume that \textsc{Integer Linear Programming} cannot be solved faster than the state-of-the-art
and present several other natural parameterized
problems are all equivalent with respect to improvements on their running time.

\begin{hypothesis}[ILP Hypothesis]\label{conj:ilp}
    For every $\eps > 0$, there is no $2^{O(m^{2-\eps})} \poly(n)$ time algorithm for
    Integer Linear Programming with $\Delta = O(1)$.
\end{hypothesis}

In all of the problems below, the symbol $m$ is chosen for the parameter of interest.
Many of them are well known applications of ILP techniques, see e.g.~\cite{knop2020combinatorial}.
\defproblem{Closest String with Binary Alphabet}{
	Alphabet $\Sigma = \{0, 1\}$, strings $s_1,s_2,\dotsc,s_m\in \Sigma^n$
	}{
		Find string $t\in \Sigma^n$ minimizing
		\begin{equation*}
			\max_{i} d(t, s_i) ,
		\end{equation*}
		where $d(t, s_i)$ is the Hamming distance between $t$ and $s_i$, i.e., the number of positions the two strings differ in.
	}
	We refer to the generalization with arbitrary $\Sigma$ simply as \textsc{Closest String}.

\defproblem{Discrepancy Minimization}{
	Universe $U = \{1,2,\dotsc, n\}$, set system $S_1,S_2,\dotsc,S_m\subseteq U$
	}{
		Find coloring $\chi : U \rightarrow \{-1, 1\}$ minimizing
		\begin{equation*}
			\max_{i} |\sum_{u\in S_i} \chi(u)| .
		\end{equation*}
	}

\defproblem{Set Multi-Cover}{
	Universe $U = \{1,2,\dotsc,m\}$, set system $S_1,S_2,\dotsc,S_n\subseteq U$, $b\in \mathbb N$
	}{
		Find $I\subseteq \{1,2,\dotsc,n\}$ of minimal cardinality such that for each $v\in U$
		there are at least $b$ sets $S_i$ with $i\in I$.
	}

\defproblem{Set Multi-Packing}{
	Universe $U = \{1,2,\dotsc,m\}$, set system $S_1,S_2,\dotsc,S_n\subseteq U$, $b\in \mathbb N$
	}{
		Find $I\subseteq \{1,2,\dotsc,n\}$ of maximal cardinality such that for each $v\in U$
		there are at most $b$ sets $S_i$ with $i\in I$.
	}
	As mentioned above, our main result is the following equivalence.
	\begin{theorem}\label{thm:equivalences}
        The following statements are equivalent:
        \begin{enumerate}[label=(\arabic*)]
            \item There exists an $2^{O(m^{2-\eps})} \poly(n)$ algorithm for \textsc{Integer Linear Programming} with $\Delta = O(1)$ when
                $\eps > 0$. \label{en:ilp}
            \item There exists an $2^{O(m^{2-\eps})} \poly(n)$ algorithm for
                \textsc{Binary Integer Linear Programming} with $A\in \{0,
                1\}^{m\times n}$ and $n\le m^{O(m)}$ \label{en:bin-ilp} with
                $\eps > 0$.\label{en:bilp}
			\item There exists an $2^{O(m^{2-\eps})} \poly(n)$ algorithm for
                \textsc{Closest String with Binary Alphabet} with $\eps >
                0$.\label{en:closest}
			\item There exists an $2^{O(m^{2-\eps})} \poly(n)$ algorithm for
                \textsc{Discrepancy Minimization} with $\eps >
                0$.\label{en:disc}
			\item There exists an $2^{O(m^{2-\eps})} \poly(n)$ algorithm for
                \textsc{Set Multi-Cover} with $\eps > 0$.\label{en:cover}
			\item There exists an $2^{O(m^{2-\eps})} \poly(n)$ algorithm for
                \textsc{Set Multi-Packing} \label{en:packing} with $\eps >
                0$.\label{en:pack}
		\end{enumerate}
	\end{theorem}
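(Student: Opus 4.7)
The plan is to establish a cycle of fine-grained reductions $(1) \Rightarrow (2) \Rightarrow (3) \Rightarrow (4) \Rightarrow (5) \Rightarrow (6) \Rightarrow (1)$, each preserving the parameter $m$ up to a constant factor and blowing $n$ up at most polynomially; composing around the cycle delivers the six-way equivalence. The implication $(1) \Rightarrow (2)$ is immediate, since Binary ILP with $A\in\{0,1\}^{m\times n}$ is a special case of ILP with $\Delta=1$, and the additional bound $n\le m^{O(m)}$ only further restricts the class.

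For the four middle implications $(2) \Rightarrow (3) \Rightarrow (4) \Rightarrow (5) \Rightarrow (6)$ I would use natural combinatorial encodings. For $(2)\Rightarrow (3)$, the rows of a Binary ILP become input strings of a Closest String instance, with right-hand sides realized via short padding and a binary search on the target Hamming distance. For $(3)\Rightarrow (4)$, the bijection $\chi = 2x - 1$ converts a Closest String bound into a two-sided Discrepancy bound. For $(4)\Rightarrow (5)$, a discrepancy constraint $|\sum_{v\in S_i}\chi(v)|\le d$ splits into two one-sided inequalities, each recast as a Set Multi-Cover constraint over a slightly enlarged universe. Finally $(5)\Rightarrow (6)$ follows from the complementation $x \mapsto 1-x$, which sends ``at least $b$'' covering constraints into ``at most $|N_v|-b$'' packing constraints. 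Throughout, one must keep the constraint count at $O(m)$ rather than $\poly(m)$ or $2^{O(m)}$, which typically means folding auxiliary constraints into variable bounds and exploiting column-type arguments.

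The main technical obstacle will be the closing implication $(6)\Rightarrow (1)$, which must reduce general ILP with $\Delta = O(1)$ to Set Multi-Packing. I plan two phases. First, reduce ILP to a Binary ILP satisfying (2): the columns of the constraint matrix lie in $\{-\Delta,\ldots,\Delta\}^m$ and thus take at most $(2\Delta+1)^m = 2^{O(m)}$ distinct values, so variables can be grouped by column type and the resulting bounded integer variables binary-encoded to yield $n\le m^{O(m)}$ binary variables; negative coefficients are then removed by substituting $y_v \mapsto U_v - y_v$ uniformly per column type, folding the constant shift into the right-hand side. Second, express the resulting Binary ILP as Set Multi-Packing by introducing one universe element per constraint and implementing each equality of $Ax=b$ as a matched pair of upper/lower packing bounds via a duplicated-element gadget. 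The delicate point I foresee is the sign elimination in phase one without introducing any new rows: a naive per-variable complementation would add $n$ linking equalities $x_j + \tilde x_j = 1$ and blow up $m$, whereas performing the complementation per column type collapses the linking into the binary encoding itself (taking the bitwise complement of $y_v$'s binary digits), so no additional constraints appear in the final Binary ILP.
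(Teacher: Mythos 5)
Your proposal differs structurally from the paper: the paper uses a ``hub'' topology, reducing every problem both to and from the ILP statements~\ref{en:ilp}/\ref{en:bilp}, whereas you propose a single directed cycle. This is a priori legitimate, but it forces you to produce direct reductions between the combinatorial problems (e.g.\ Closest String $\leftrightarrow$ Discrepancy) that the paper never needs, and it also causes a consistent notational reversal in your sketch: to prove ``algorithm for $P$ implies algorithm for $Q$'' one must \emph{reduce $Q$ to $P$}, yet your descriptions of ``$(2)\Rightarrow(3)$'', ``$(3)\Rightarrow(4)$'', ``$(4)\Rightarrow(5)$'' all describe reductions going the wrong way (e.g.\ ``rows of a Binary ILP become input strings of a Closest String instance'' proves $(3)\Rightarrow(2)$, not $(2)\Rightarrow(3)$). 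As written, your implications do not compose into a cycle.

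The more substantive gap is in the step you yourself flag as the main obstacle: turning a general $\Delta=O(1)$ ILP into one with a $\{0,1\}$ matrix while keeping $m'=O(m\cdot\mathrm{polylog}\, m)$. Your proposed fix, ``substituting $y_v \mapsto U_v - y_v$ uniformly per column type,'' flips the sign of \emph{every} entry in that column simultaneously, so it does nothing for columns that contain both positive and negative coefficients, which is the generic case. Moreover even after sign issues are resolved you still have entries of magnitude up to $\Delta$, and you never explain how those become $\{0,1\}$ without blowing up $m$ — binary-encoding the variables, as you suggest, actually makes coefficients of the new variables equal to $a_{jv}\cdot 2^k$, i.e.\ \emph{larger}, not $\{0,1\}$. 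The paper solves both issues at once in Theorem~\ref{thm:large-entries} by replacing each row $j$ with $O(\log\Delta)$ groups of four rows $r^+_i(j), {r'}^+_i(j), r^-_i(j), {r'}^-_i(j)$ encoding signed binary digits, together with auxiliary ``carry/shift'' variables that let mass move between powers and between signs; none of that machinery is present in your plan. Similarly, the padding for the Closest String reduction is not ``short'': the paper pads each string by $4n$ characters and adds two reference strings $r_1, r_2$ whose mutual Hamming distance forces the tail of any valid center string to be all zeros; without that gadget there is no way to realize per-string distance thresholds $d_j$ within a uniform-radius Closest String instance. I would focus on working out the sign- and magnitude-reduction construction carefully before anything else, since it is the load-bearing step for closing any version of the cycle.
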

	Note that \cref{en:ilp} is the negation of \Cref{conj:ilp}.
	All problems in \Cref{thm:equivalences} are easily transformed into the
    first problem, i.e., \textsc{Integer Linear Programming} with $\Delta =
    O(1)$, while maintaining the same value of $m$. Hence, the more interesting aspect of the theorem is that all these problems are as expressive as the first one.

	\Cref{conj:ilp} considers \textsc{Integer Linear Programming} with relatively small entries, i.e.,
    $\Delta = O(1)$. One can also ask the question of whether there is any parameter regime for $\Delta$ for
    which the state-of-the-art can be improved. In this spirit, a stronger variant of the conjecture is the following.
    \begin{hypothesis}[Strong ILP Hypothesis]\label{conj:s-ilp}
	    For every $\eps > 0$ and $\delta \ge 0$, there is no $2^{O(m^{\delta + 2-\eps})} \poly(n)$ time algorithm for Integer Linear Programming with $\Delta = 2^{m^{\delta}}$.
    \end{hypothesis}
    Note that~\cref{conj:ilp} is a special case of~\cref{conj:s-ilp} for $\delta = 0$.
	Another interesting regime is the complexity of \textsc{Integer Linear Programming} with $\Delta = 2^m$,
	because of a connection to block-structure integer programming, which we elaborate on later.
	There, the state-of-the-art algorithm requires time $m^{O(m^3)} \poly(n)$,
	\textsc{Integer Linear Programming} with large entries can be reduced to an equivalent instance with
	a $0/1$ matrix as seen in
	the following theorem, but the reduction is not strong enough to show equivalence between the two
	hypotheses.
	\begin{theorem}\label{thm:large-entries}
		There is a polynomial time algorithm that transforms an instance of
		\textsc{Integer Linear Programming} with $\Delta > 1$ into an equivalent one with
		$A' \in \{0,1\}^{m'\times n'}$ for $m' = O(m\log\Delta)$ and $n'\le {m'}^{O(m')}$.
	\end{theorem}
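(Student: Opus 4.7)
The plan is to reduce coefficients to $\{0,1\}$ in three stages: a binary expansion and Horner-style unfolding of powers of two, a slack-variable trick to remove negative entries, and a final column-grouping step to bound the number of variables.

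For the first stage, shift each variable so that $\ell = 0$ and split each coefficient as $a_{ij} = a_{ij}^+ - a_{ij}^-$ with $a_{ij}^\pm \in \{0,\ldots,\Delta\}$. Binary-expand $a_{ij}^\pm = \sum_{t=0}^{K-1} 2^t b_{ij}^{(t,\pm)}$ with $b_{ij}^{(t,\pm)} \in \{0,1\}$ and $K = \lceil \log_2(\Delta+1)\rceil$. For every constraint $i$ and bit $t$ I would introduce auxiliary variables $Z_{i,t}^\pm$ defined by $Z_{i,t}^\pm = \sum_j b_{ij}^{(t,\pm)} x_j$, a constraint with $\{-1,0,1\}$ coefficients. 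The original equation becomes $\sum_t 2^t(Z_{i,t}^+ - Z_{i,t}^-) = b_i$, which I would unfold by Horner: introduce $W_{i,0},\ldots,W_{i,K}$ with $W_{i,K} = 0$, $W_{i,0} = b_i$, and the recurrence $W_{i,t} = Z_{i,t}^+ - Z_{i,t}^- + 2W_{i,t+1}$. The coefficient $2$ is then eliminated by duplicating $W_{i,t+1}$ with a copy $W_{i,t+1}'$ under the constraint $W_{i,t+1}' = W_{i,t+1}$, so that the recurrence reads $W_{i,t} = Z_{i,t}^+ - Z_{i,t}^- + W_{i,t+1} + W_{i,t+1}'$. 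The whole system has $O(m\log\Delta)$ constraints with coefficients in $\{-1,0,1\}$.

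For the second stage I shift all variables to be non-negative and apply the standard slack-variable trick. Every constraint of the form $\sum_{j\in P} x_j - \sum_{j \in N} x_j = b$ is replaced by the pair $\sum_{j\in P} x_j + s = M$ and $\sum_{j \in N} x_j + s = M - b$, where $s \in [0,M]$ is a fresh slack and $M$ is any explicit integer upper bound on the two sides. Both new constraints have $0/1$ coefficients and together they enforce the original equation, so the number of constraints only doubles and we keep $m' = O(m\log\Delta)$.

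Finally, I would merge any two columns of the resulting $0/1$ matrix that coincide, replacing their variables with a single variable whose bound is the sum of the old bounds; this preserves feasibility because any integer in $[0, u_1 + u_2]$ splits into integers in $[0, u_1]$ and $[0, u_2]$. Since the columns live in $\{0,1\}^{m'}$, this bounds $n' \le 2^{m'} \le {m'}^{m'} = {m'}^{O(m')}$, as required, and every stage is polynomial-time in the input size. The main obstacle I expect is the sign handling, because binary expansion and Horner's scheme inevitably produce $-1$ coefficients that cannot be written with entries in $\{0,1\}$ alone, which is exactly why the slack trick is essential. A related subtlety is ordering: column-grouping must be done after the $0/1$ reduction, because the Horner unfolding itself introduces $\Theta(m\log\Delta)$ new variables whose columns are $\{0,1\}$-valued only at the end.
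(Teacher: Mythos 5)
Your proposal is correct and reaches the same $m' = O(m\log\Delta)$ and $n' \le 2^{m'} \le {m'}^{O(m')}$ bounds, but it takes a genuinely different route from the paper's. The paper never passes through a $\{-1,0,1\}$ matrix: for each original row $j$ it introduces dedicated ``positive'' rows $r^+_i(j), {r'}^+_i(j)$ and ``negative'' rows $r^-_i(j), {r'}^-_i(j)$ for each power $2^i$, places the binary encoding of $|A_{ji}|$ in the positive or negative rows depending on $\operatorname{sign}(A_{ji})$, and then adds auxiliary $\{0,1\}$-columns that act as carry and cancellation gadgets (e.g.\ a variable hitting $r^+_i(j), {r'}^+_i(j)$ and $r^-_{i+1}(j)$ encodes $2^i+2^i-2^{i+1}=0$). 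These gadget columns leave the weighted sum $\sum_i 2^i(r^+_i + {r'}^+_i - r^-_i - {r'}^-_i)$ invariant, so the single right-hand side $b_j$ on $r^+_0(j)$ enforces the original equation while the whole matrix is directly $\{0,1\}$-valued. You instead split coefficients into positive and negative parts, binary-expand, Horner-unfold with a doubling trick to get $\{-1,0,1\}$ entries, and then eliminate the remaining $-1$s by the two-constraint slack gadget. Both approaches work; the paper's is more compact (no sign-elimination phase and no intermediate matrix class) and avoids the need to shift the accumulator variables $W_{i,t}$, which in your construction can be negative and require explicit lower bounds computed from the $Z$-ranges before the slack step can be applied. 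That bounds bookkeeping is elided in your write-up and deserves a sentence, but it is routine; the column-merging step you use at the end is exactly what the paper also invokes in Section 2.5.2 to bound $n'$.
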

	This implies that if there is an algorithm with running time $2^{O(m^{1.5-\eps})}\poly(n)$ for \textsc{Integer Linear Programming} with $A \in \{0,1\}^{m\times n}$, then there is a $2^{O(m^{3-\eps'})}\poly(n)$ time algorithm for \textsc{Integer Linear Programming} with $\Delta = 2^m$.

	One might hope to improve the theorem to $m' = O(m \sqrt{\log \Delta})$, since then a $2^{O(m^{2-\eps})}\poly(n)$ time algorithm for $0/1$ matrices would imply a $2^{O(m^{3-\eps'})}\poly(n)$ time
	algorithm for $\Delta = 2^m$. However, such a reduction would imply the strong result
	that under ETH is equivalent to \Cref{conj:ilp}. This is because
under ETH the \textsc{Subset Sum} problem, i.e., the case when $m = 1$, cannot be
solved in $\Delta^{o(1)} \poly(n)$ time ~\cite{abboud2022seth} and the hypothetical reduction would be able to encode
an instance of \textsc{Subset Sum} into an ILP with $m = O(\sqrt{\Delta})$.
	We are not aware of any meaningful reduction in the other direction, i.e., from large $m$ and small $\Delta$ to smaller $m$ and larger $\Delta$. It is possible to aggregate $m$ constraints into a single one
	with entries bounded by $\Delta' = \Delta^{O(m^2)}$, but this reduction seems useless since the resulting
	parameter range requires $\poly(\Delta') \cdot \poly(n)$ time due to the ETH lower bound mentioned above.

    Assuming~\cref{conj:s-ilp}, we derive a tight lower for a form of block-structured integer linear programs that has
	been studied extensively in recent literature.
	For simplicity, we consider here the basic setting with $m\times m$ submatrices.

\defproblem{$n$-Fold Integer Linear Programming}{
	Square matrices $A_1,\dotsc,A_n,B_1,\dotsc,B_n\in \{-\Delta,\dotsc,\Delta\}^{m\times m}$, right-hand sides $b^{(0)},\dotsc,b^{(n)}\in \mathbb Z^m$.
	}{
		Find $x^{(1)},\dotsc,x^{(n)}\in \mathbb Z_{\ge 0}^{m}$ with
		\begin{align*}
			A_1 x^{(1)} + \dotsc + A_n x^{(n)} &= b^{(0)} \\
			B_1 x^{(1)} &= b^{(1)} \\
			\vdots \\
			B_n x^{(n)} &= b^{(n)}
		\end{align*}
	}

    \begin{theorem}\label{thm:nfold}
        For every $\delta > 0$, 
		there is no algorithm with running time $2^{O(m^{3-\delta})}\poly(n)$
        for \textsc{$n$-Fold Integer Linear Programming} when the maximum
        absolute entry is bounded by $\Delta = O(1)$, unless~\cref{conj:s-ilp} is false.
	\end{theorem}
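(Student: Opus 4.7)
The plan is to contrapose: assume for some $\delta > 0$ a $2^{O(m^{3-\delta})}\poly(n)$ algorithm for $n$-Fold Integer Linear Programming with $\Delta = O(1)$, and use it to refute the $\hat\delta = 1$ case of \cref{conj:s-ilp}, namely by producing a $2^{O(\hat m^{3-\delta})}\poly(n)$ algorithm for (ordinary) Integer Linear Programming with $\hat m$ constraints, $n$ variables, and $\Delta = 2^{\hat m}$. To do so I will reduce such an ILP to an $n$-Fold ILP whose block size is $O(\hat m)$ and whose matrix entries have absolute value at most $2$; running the assumed algorithm on the reduced instance then takes $2^{O(\hat m^{3-\delta})}\poly(n)$ and contradicts \cref{conj:s-ilp} by setting its $\eps$ to $\delta$.

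The heart is the reduction. Given $Ax = b$ with $A \in \{-2^{\hat m},\dots,2^{\hat m}\}^{\hat m \times n}$ and box constraints $0 \le x_j \le u_j$ (after shifting the lower bound to zero), I bit-decompose each coefficient using signed digits as $a_{ij} = \sum_{k=0}^{\hat m} a_{ij}^{(k)} 2^k$ with $a_{ij}^{(k)} \in \{-1,0,1\}$. For each variable $j \in \{1,\dots,n\}$ I introduce one block of size $m := \hat m + 2$ with variables $Y_j^{(0)}, Y_j^{(1)}, \dots, Y_j^{(\hat m)}, S_j$ and local constraints
\[
Y_j^{(k)} - 2 Y_j^{(k-1)} = 0 \quad (k = 1, \dots, \hat m), \qquad Y_j^{(0)} + S_j = u_j,
\]
padded to $m$ rows by a trivial row $0 = 0$; all entries have absolute value at most $2$. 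Together with non-negativity these force $Y_j^{(k)} = 2^k Y_j^{(0)}$ with $Y_j^{(0)} \in \{0,\dots,u_j\}$, so one reads $x_j := Y_j^{(0)}$. The $m$ global constraints are
\[
\sum_{j=1}^{n} \sum_{k=0}^{\hat m} a_{ij}^{(k)} Y_j^{(k)} = b_i \qquad (i = 1, \dots, \hat m),
\]
padded by two zero rows; every entry is in $\{-1, 0, 1\}$, and substituting the doubling relations collapses the $i$-th row to $\sum_j a_{ij} x_j = b_i$. Thus the reduced $n$-Fold instance is feasible exactly when the original ILP is, with solutions in bijection.

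To finish, observe that $m = \hat m + 2 = O(\hat m)$, $\Delta \le 2 = O(1)$, there are $n$ blocks, and the bit size of the reduced instance is polynomial in that of the input; the hypothesised $n$-Fold algorithm therefore solves it, and hence the original ILP, in time $2^{O(m^{3-\delta})}\poly(nm) = 2^{O(\hat m^{3-\delta})}\poly(n)$, yielding the contradiction. The main obstacle in my view is the design of the encoding itself: the naive route that first applies \cref{thm:large-entries} to an ILP with $\hat m$ constraints and $\Delta = 2^{\hat m}$ produces a $0/1$ ILP with $\Theta(\hat m^2)$ constraints, and viewing this trivially as an $n$-Fold ILP forces block size $\Theta(\hat m^2)$; the hypothetical algorithm then runs in $2^{O(\hat m^{6-2\delta})}\poly(n)$, which is not fast enough to contradict \cref{conj:s-ilp} unless $\delta > 3/2$. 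The trick that brings the block size down to $O(\hat m)$ (and hence covers every $\delta > 0$) is to keep the powers of two \emph{inside} each block by means of the length-$\hat m$ doubling chain, so that the original $\hat m$ equations survive unchanged as the $\hat m$ global rows with coefficients in $\{-1, 0, 1\}$.
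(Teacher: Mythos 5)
Your reduction is correct, and it shares the crucial idea with the paper's proof: keep the power-of-two scaling \emph{inside} each block's local constraints via a length-$\hat m$ doubling chain (with entries in $\{-2,0,1\}$), and decompose each large coefficient $a_{ij}$ into signed digits that appear in the global constraints. This keeps the block dimension at $O(\hat m)$, which is exactly what makes the bound $2^{O(\hat m^{3-\delta})}$ rather than $2^{O(\hat m^{6-2\delta})}$, as you correctly observe when rejecting the naive route through \cref{thm:large-entries}. The difference is in how the two reductions use the block. The paper first invokes the bounded-to-binary reduction of \cref{sec:bounded-to-bin} (which keeps $m$ and $\Delta$ but inflates $n$ by $2^{O(m^2)}$, absorbed into $\poly(n)$ since $m^2\le m^{3-\delta}$ for $\delta\le 1$), and then builds each block as a two-state \emph{selector}: a lower-triangular matrix $B$ together with the row $x_1+x_m=1$ and right-hand side $(1,0,\dots,0)\T$ that admits exactly the two nonnegative solutions $(0,\dots,0,1)$ and $(1,2,\dots,2^{m-2},0)$, so that the global contribution is either $0$ or the full column $A_i$. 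Your version skips the detour through binary variables entirely: each block is a \emph{multiplier} producing $(x_j, 2x_j,\dots,2^{\hat m}x_j)$ together with a slack $S_j$ and local right-hand side $(\dots, u_j,\dots)$ enforcing $0\le x_j\le u_j$ directly. This is a bit cleaner (no intermediate $n$-blowup, no case distinction on $\delta$), and incidentally sidesteps a small representability wrinkle in the paper's choice of $\lambda\in\{-2,\dots,2\}^{m-1}$, whose maximum $2^m-2$ does not quite reach $2^m$, whereas your digits $a_{ij}^{(k)}\in\{-1,0,1\}$ for $k=0,\dots,\hat m$ comfortably cover $\{-2^{\hat m},\dots,2^{\hat m}\}$. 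Both arguments deliver the same conclusion.
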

	This matches the best algorithms known for the problem, see~\cite{cslovjecsek2021block} and references therein. The reduction follows the same idea as used in~\cite{hunkenschroder2024tight},
	where the authors show a non-tight quadratic lower bound for the exponent based on ETH.
	Our lower bound	is stronger simply because the conjecture we base it on is
    stronger.

	\subsection{Tightness of more general problems}
	There are a number of other, more general problems to the ones mentioned above,
	for which known algorithms would be tight assuming that one cannot improve the running
	time for the class of problems in \Cref{thm:equivalences}.
	However, for these, we do not know if they are all equivalent.

	The algorithm by Eisenbrand and Weismantel~\cite{eisenbrand2019proximity} also works for the
	optimization version of ILP, i.e.,
	\begin{equation*}
		\max c\T x, \ A x = b, \ \ell \le x \le u, \ x\in\mathbb Z^n .
	\end{equation*}
	This leads to the same running time of $(m\Delta)^{O(m^2)}\poly(n)$ except for a slightly higher constant in the exponent. Notably, the coefficients of $c$ do not increase the number
	of arithmetic operations.

	Given a matrix $A\in \{-\Delta,\dotsc,\Delta\}^{m\times n}$ and
	a convex function $g : \mathbb R^m \rightarrow \mathbb R\cup\{\infty\}$, Dadush,
	L\'eonard, Rohwedder, and Verschae~\cite{dadush2023optimizing} have shown that one can find the minimizer of
	$g(Ax), \ x \in \{0, 1\}^n$ in time $(m\Delta)^{O(m^2)}\poly(n)$ (assuming polynomial time to evaluate $g$).
	This problem is a generalization of \textsc{Binary Integer Linear Programming}, since one can take $g(b') = 0$ if $b' = b$ and $g(b') = \infty$ otherwise.

	Given a linear matroid $M = (E, \mathcal I)$, where $|E| = n$, 
	a matrix $A\in\{-\Delta,\dotsc,\Delta\}^{n\times m}$
	and a right-hand side $b\in\mathbb Z^m$, Eisenbrand, Rohwedder, and W\k{e}grzycki~\cite{eisenbrand2024sensitivity}
	gave an algorithm that finds a basis $B$ of $M$ such that $A x_B = b$ in time
	$O(m\Delta)^{O(m^2)}\poly(n)$, where
	$x_B$ is the incidence vector of $B$. This generalizes \textsc{Binary Integer Linear Programming}, since one can take $E$ as the set of binary variables and $n$ additional dummy variables
	and then take $M$ as the uniform matroid of rank $n$.
	
	Finally, \textsc{Closest String} (with arbitrary alphabet) can still be solved in time
	$m^{O(m^2)}\poly(n)$ for example by casting it as the previous matroid problem~\cite{eisenbrand2024sensitivity} or as a block structured ILP, see~\cite{knop2020combinatorial}.

	\subsection{Algorithm for few distinct variables}
	We show that the running time of $2^{\tilde O(m^2)} \poly(n)$ for Integer Linear Programming with $\Delta = O(1)$,
	i.e., the subject of \Cref{conj:ilp}, can be improved if the number of distinct variables is low.
	For the sake of generality, we state the algorithmic result for the optimization variant and any range of $\Delta$.
	\begin{theorem}\label{thm:ilp}
		Consider the integer programming problem
	\begin{align}
		\max\ &c\T x \notag\\
		A x &= b \label{eq:ilp} \\
		\ell_i \le x_i &\le u_i &i=1,\dotsc,n \notag\\
		x &\in \mathbb Z^n \ . \notag
	\end{align}
		Let $\Delta$ be an upper bound on the absolute value of entries of $A$.
		The problem \eqref{eq:ilp} can be solved in time
		\begin{equation*}
		n^{m+1} \cdot O(m\Delta)^m \cdot \log(\lVert u - \ell \rVert_\infty) .
		\end{equation*}
	\end{theorem}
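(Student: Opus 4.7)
The plan is to reduce the general problem to $O(\log \lVert u-\ell\rVert_\infty)$ instances of binary integer programming and solve each by a standard dynamic program achieving the base running time $O(n^{m+1}(m\Delta)^m)$. After the shift $x \leftarrow x - \ell$ I may assume $\ell = 0$, set $U = \lVert u\rVert_\infty$, and let $K = \lceil \log_2 (U+1) \rceil$.

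For the base case in which $u_i \le 1$ for every $i$ (i.e., the variables are binary), I would run the textbook dynamic program: process the variables $1, 2, \dotsc, n$ in order, storing for each reachable partial sum $\sum_{j\le i} A_j x_j \in \mathbb Z^m$ the best partial objective $c\T x_{\le i}$. Only partial sums within $\ell_\infty$-distance $n\Delta$ of $b$ can be completed to a feasible solution, so there are at most $O((n\Delta)^m)$ useful states per layer; with two transitions per state (one per binary value of $x_{i+1}$) and $O(m)$ arithmetic per transition, the overall running time is $O(n^{m+1}(m\Delta)^m)$.

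For arbitrary bounds I would employ capacity scaling. Define a chain of scaled instances $\mathrm{ILP}^{(0)}, \mathrm{ILP}^{(1)}, \dotsc, \mathrm{ILP}^{(K)}$ with upper bounds $u^{(k)} = \lfloor u/2^{K-k}\rfloor$, so that $u^{(0)} \in \{0,1\}^n$ and $u^{(K)} = u$, together with right-hand sides $b^{(k)}$ obtained by a consistent rounding of $b$ and costs $c^{(k)}$ derived from $c$. Level $0$ is solved by the base case. To lift an optimum $\tilde x^{(k)}$ of level $k$ to one at level $k{+}1$, I would search for a solution of the form $x^{(k+1)} = 2\tilde x^{(k)} + \delta$ with $\delta \in \{0,1\}^n$; the constraint $A x^{(k+1)} = b^{(k+1)}$ then reduces to the binary ILP $A\delta = b^{(k+1)} - 2 A\tilde x^{(k)}$ with per-coordinate box constraints $\delta_i \in \{0,1\} \cap [-2\tilde x^{(k)}_i, u^{(k+1)}_i - 2\tilde x^{(k)}_i]$. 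This is solved by the base-case DP in $O(n^{m+1}(m\Delta)^m)$ time per level; summing over the $K = O(\log U)$ levels yields the claimed total running time.

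The main obstacle is proving correctness of the lifting step: one has to show that $b^{(k)}, c^{(k)}, u^{(k)}$ can be chosen so that some optimum $\tilde x^{(k)}$ at level $k$ lifts via a single binary bit of correction per variable to an optimum at level $k{+}1$. This is a proximity-type structural lemma — essentially, that consecutive-scale optima are within bit distance of each other once the scaled data are defined by consistent rounding of the binary expansions of $b$ and $c$. Handling edge effects when $u_i$ is not of the form $2^K-1$, and arguing that a globally optimal $x$ is reconstructed after lifting through all $K$ levels (rather than merely a feasible one), are the most delicate technical points and are what I would spend the bulk of the proof on.
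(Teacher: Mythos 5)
Your overall plan — decompose the bounds bit by bit, run a base-case DP for each of $O(\log\lVert u - \ell\rVert_\infty)$ scales, and charge $O(n^{m+1}(m\Delta)^m)$ per scale — is the right shape and matches the paper's high-level structure. However, the lifting step as you describe it carries exactly \emph{one} candidate pair $(b^{(k)}, \tilde x^{(k)})$ from each level to the next, with $b^{(k)}$ fixed in advance by ``a consistent rounding of~$b$.'' This cannot work: the intermediate right-hand side $A\lfloor x^*/2^{K-k}\rfloor$ induced by the true optimum $x^*$ is \emph{not} a deterministic function of $b$ and the scale, so no a-priori rounding rule produces the correct chain of subtargets.

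A concrete failure: take $m=2$, $n=2$, $A = \left(\begin{smallmatrix}2&1\\1&2\end{smallmatrix}\right)$, $b = (3,3)^{\mathsf T}$, $\ell = (0,0)^{\mathsf T}$, $u = (3,3)^{\mathsf T}$. The unique feasible point is $x^* = (1,1)^{\mathsf T}$, so at scale $k=1$ (where $u^{(1)} = (1,1)^{\mathsf T}$) the only liftable coarse solution is $\tilde x^{(1)} = \lfloor x^*/2\rfloor = (0,0)^{\mathsf T}$, forcing $b^{(1)} = A\tilde x^{(1)} = (0,0)^{\mathsf T}$. But the natural rounding $\lfloor b/2\rfloor = (1,1)^{\mathsf T}$ yields the scaled system $A\delta = (1,1)^{\mathsf T}$, $\delta\in\{0,1\}^2$, which is infeasible, and your algorithm would declare the original instance infeasible. (A one-constraint example with $A=(2,1)$, $b=3$, $u=(1,3)^{\mathsf T}$, $c=(1,0)^{\mathsf T}$ similarly returns a suboptimal answer.) So the ``proximity-type structural lemma'' you flag as the hard part is in fact false as stated.

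The fix — and the essential idea you are missing — is to carry \emph{all} relevant intermediate right-hand sides across scales, not just one. This is precisely what the paper does: it decomposes low-to-high ($x=\sum_j 2^j z^{(j)}$ with $0\le z^{(j)}_i\le c_j(u_i)\in\{0,1,2\}$ via a cover lemma, which also handles $u_i\neq 2^K-1$ cleanly), and at step $j$ maintains an optimal partial solution for every candidate $b^{(\le j)} = \sum_{h\le j} 2^h A z^{(h)}$. The crucial observation that keeps this tractable is that $b^{(\le j)}\equiv b\pmod{2^{j+1}}$ (since the remaining contribution is a multiple of $2^{j+1}$) together with $\lVert b^{(\le j)}\rVert_\infty = O(2^j n\Delta)$, so only $O(n\Delta)^m$ right-hand sides are relevant at each level. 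Each level is then one layered-DAG longest-path computation of size $n^{m+1}\cdot O(m\Delta)^m$. Your top-down scaling could also be repaired by enumerating the $O(n\Delta)^m$ candidate $b^{(k)}$ at each level (those within $\ell_\infty$-distance $n\Delta$ of $b/2^{K-k}$), but either way, the one-solution-per-level proposal is not salvageable by a cleverer rounding of $b$.
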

	Using standard reductions, see e.g.~\Cref{sec:bounded-to-bin}, one may reduce $\| u - \ell \|_{\infty}$ to
	$(m\Delta)^{O(m)}$, making the logarithmic factor insignificant.

	We will now interpret this running time and point out interesting special cases.
	\paragraph*{\textsc{Binary ILP} with $A\in\{0,1\}^{m\times n}$.}
	Here, the running time above implies an ${n'}^{O(m)} + O(nm)$ time algorithm, where $n'$ is the number of distinct variables, i.e., variables that differ either in their entry of $c$ or in their column of $A$:
	one can merge identical variables in time $2^{O(m)} + O(nm)$ time, after which $\|u\|_{\infty}\le n$.
	Furthermore, without loss of generality, the rows of $A$ are linearly independent, thus $m\le n'$.
	Hence, the overall running time is
	\begin{align*}
		2^{O(m)} + O(nm) + {n'}^{m+1} \cdot O(m)^m \cdot \log n &\le {n'}^{O(m)} \cdot \log n + O(nm) \\
		&\le {n'}^{O(m)} + O(nm).
	\end{align*}
	Here, the last inequality follows from a case distinction whether $\log n$ is greater than ${n'}^{O(m)}$ or not.
	Note that in the setting without objective function we have $n'\le 2^{m}$.
	Thus, this running time is at least as good as the known $2^{\tilde O(m^2)}\poly(n)$ time algorithms.
	It also dominates the running time of $n^{O(m)}$ one would get by simple dynamic programming over the right-hand
	sides $b'$ that are feasible (for which there can be at most $n^{m}$).
	\paragraph*{\textsc{Binary ILP} with $A\in\{0,1\}^{m\times n}$ and a constant number of $1$s in each column.}
	Here, the number of distinct columns is polynomial in $m$ and the previous case implies a running time
	of $m^{O(m)} + O(nm)$, meaning that \Cref{conj:ilp} does not extend to this special case.

	\section{Reductions}
	The basic idea of all reductions for \Cref{thm:equivalences} is that we transform one problem with parameter $m$
	into another problem
	with parameter $m' = O(m \cdot \log^{O(1)} m)$. Additionally the size of the instance may only increase from $n$ to $2^{O(m^{2-\delta})}\poly(n)$ for some $\delta > 0$.
	The concrete reductions we prove can be seen in \Cref{fig:reductions}.
	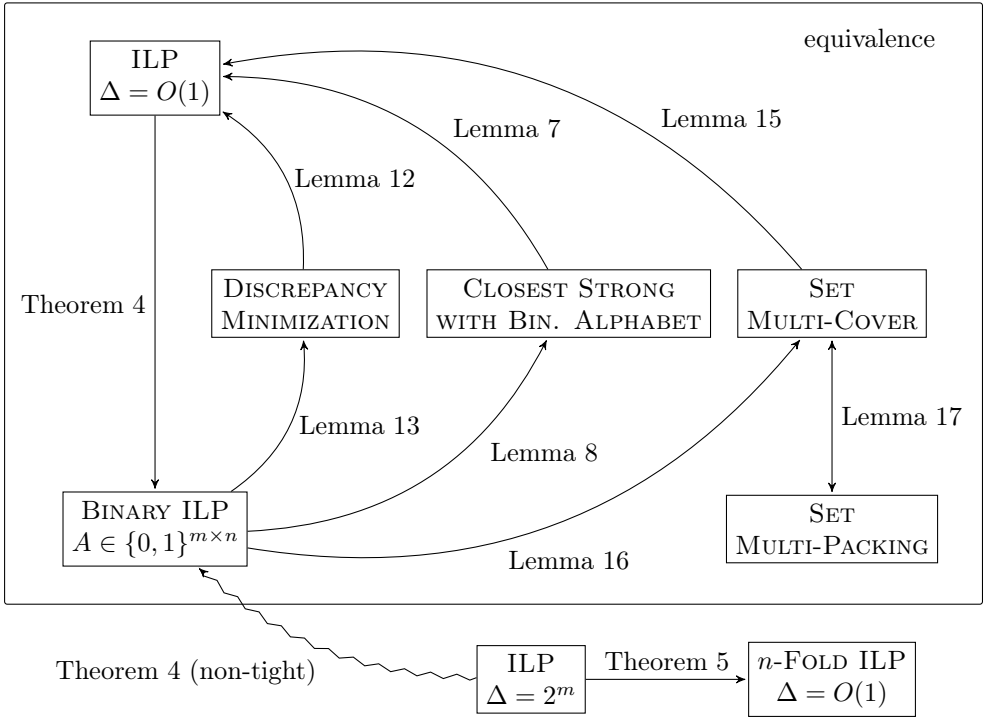
\begin{figure}
		\begin{center}
		\begin{tikzpicture}
			\node(string)[draw, align=center] at (0.5, 3) {\textsc{Closest
			Strong} \\ \textsc{with Bin. Alphabet}};
			\node(discr)[draw, align=center] at (-3, 3) {\textsc{Discrepancy} \\ \textsc{Minimization}};
			\node(cover)[draw, align=center] at (4, 3) {\textsc{Set} \\ \textsc{Multi-Cover}};
			\node(packing)[draw, align=center] at (4, 0) {\textsc{Set} \\ \textsc{Multi-Packing}};
			\node(binilp)[draw, align=center] at (-5, 0) {\textsc{Binary ILP} \\ $A\in\{0,1\}^{m\times n}$};
			\node(ilp)[draw, align=center] at (-5, 6) {\textsc{ILP} \\ $\Delta = O(1)$};
			\draw[<->] (cover) to node[pos=0.5, right] {\Cref{lem:packing-cover}} (packing);
			\draw[->] (binilp) to[bend right] node[pos=0.5, right] {\Cref{lem:discr2}} (discr);
			\draw[->] (discr) to[bend right] node[pos=0.5, right] {\Cref{lem:discr1}} (ilp);
			\draw[->] (ilp) to node[pos=0.5, left] {\Cref{thm:large-entries}} (binilp);
			\draw[->] (cover) to[bend right] node[pos=0.3, above right] {\Cref{lem:cover1}} (ilp);
			\draw[->] (binilp) to[bend right] node[pos=0.4, below right] {\Cref{lem:cover2}} (cover);
			\draw[->] (binilp) to[bend right] node[pos=0.7, below right] {\Cref{lem:string2}} (string);
			\draw[->] (string) to[bend right] node[pos=0.4, above right] {\Cref{lem:string1}} (ilp);
			\node(ilplarge)[draw, align=center] at (0, -2) {\textsc{ILP} \\ $\Delta = 2^m$};
			\node(nfold)[draw, align=center] at (4, -2) {\textsc{$n$-Fold ILP} \\ $\Delta = O(1)$};
			\draw[->] (ilplarge) to node[pos=0.5, above] {\Cref{thm:nfold}} (nfold);
			\draw[->, decoration={zigzag, amplitude=1pt}, decorate] (ilplarge) to[bend left=20] node[pos=0.5, below left] {\Cref{thm:large-entries} (non-tight)} (binilp);
			\draw[rounded corners=1pt] (-7, -1) rectangle (6, 7);
			\node at (4.5, 6.5) {equivalence};
		\end{tikzpicture}
		\end{center}
		\caption{Overview of reductions in this paper.}
		\label{fig:reductions}
	\end{figure}

	\subsection{Closest String}
	\label{sec:string}

	Let $d$ be the bound on the maximum hamming distance in the decision version
    of \textsc{Closest String with Binary Alphabet}. For a string $s\in \Sigma^n$ we denote by $s[i]$ the $i$th character. For $i \le j$ we
	write $s[i\dotsc j]$ for the substring from the $i$th to the $j$th character.
    \begin{lemma}\label{lem:string1}
	    \Cref{thm:equivalences}, Statement~\ref{en:ilp} implies Statement~\ref{en:closest}
    \end{lemma}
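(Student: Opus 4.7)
The plan is to model the decision version of \textsc{Closest String with Binary Alphabet} (for a fixed distance bound $d$) as a feasibility \textsc{Integer Linear Programming} instance with $m$ constraints and $\Delta = O(1)$, then invoke Statement~\ref{en:ilp} and wrap it in a search over $d$. The key observation is that only the \emph{column type} in $\{0,1\}^m$ matters: for each position $i\in\{1,\dots,n\}$ the vector $\tau(i) = (s_1[i], s_2[i], \dots, s_m[i])\T \in \{0,1\}^m$ determines the contribution of that position to each pairwise Hamming distance, regardless of the choice at other positions. Let $n_\tau$ denote the number of positions $i$ with $\tau(i) = \tau$, and note that the number of distinct types is at most $N := \min(n, 2^m)$.

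First, I would introduce a variable $x_\tau \in \{0, 1, \dots, n_\tau\}$ for each realized type $\tau$, representing the number of positions of that type at which $t$ is set to $1$. At such a position, $t$ contributes $1-\tau_j$ to $d(t, s_j)$; at the remaining $n_\tau - x_\tau$ positions it contributes $\tau_j$. Summing,
\begin{equation*}
d(t, s_j) \;=\; \sum_\tau \bigl[n_\tau \tau_j + (1 - 2\tau_j)\, x_\tau\bigr],
\end{equation*}
which is linear in $x_\tau$ with coefficients in $\{-1,+1\}$. The decision problem for a given $d$ becomes: find integers $x_\tau$ satisfying the box-constraints together with the $m$ inequalities $\sum_\tau (1 - 2\tau_j)\, x_\tau \le d - \sum_\tau n_\tau \tau_j$ for $j=1,\dots,m$. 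I would convert each inequality to an equality by adding a nonnegative slack variable bounded by $n$; this preserves $\Delta = 1$ and adds only $m$ new variables. The resulting ILP has $m$ constraints, at most $N + m \le 2N$ variables, entries in $\{-1,0,+1\}$, and box-constraints with $\|u - \ell\|_\infty \le n$, so it is a valid instance for Statement~\ref{en:ilp}.

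Next, to solve the optimization version, I would iterate over $d \in \{0,1,\dots,n\}$ (or binary search over it) and return the smallest $d$ for which the ILP is feasible; this adds only an $O(n)$ multiplicative overhead. Applying the hypothesized $2^{O(m^{2-\varepsilon})}\poly(n')$ algorithm with $n' = O(N) \le O(\min(n, 2^m))$ yields total running time
\begin{equation*}
O(n) \cdot 2^{O(m^{2-\varepsilon})} \poly(\min(n, 2^m)) \;=\; 2^{O(m^{2-\varepsilon})} \poly(n),
\end{equation*}
since $\poly(2^m) \le \poly(n)$ whenever $2^m \le n$, and otherwise $\poly(2^m) = 2^{O(m)}$ is absorbed into $2^{O(m^{2-\varepsilon})}$.

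There is no real conceptual obstacle here; the reduction is a textbook column-type aggregation. The only point that deserves care is bookkeeping the size blow-up: one must verify that $\poly(N)$ in the hypothesized running time can always be charged to $\poly(n)$ in the original instance, which is why the case split on whether $n \le 2^m$ or $n > 2^m$ is needed, and one must confirm that the slack variables do not inflate $\Delta$ beyond $O(1)$.
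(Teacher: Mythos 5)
Your proof is correct and takes essentially the same route as the paper: model the decision version of \textsc{Closest String with Binary Alphabet} as a feasibility ILP with $m$ constraints, entries in $\{-1,0,1\}$, and slack variables to turn inequalities into equalities, then invoke Statement~\ref{en:ilp} while searching over $d$. The paper simply keeps one binary variable $x_i$ per string position, so your column-type aggregation and the ensuing case split on whether $n \le 2^m$ are unnecessary (though harmless) extra bookkeeping---$\poly(n)$ on $n$ variables already gives the target running time.
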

    \begin{proof}
	The following is an ILP model for \textsc{Closest String with Binary Alphabet}.
	\begin{align*}
		\sum_{i=1}^n x_i \cdot \mathbbm{1}_{\{s_j[i] = 0\}} + (1 - x_i) \cdot \mathbbm{1}_{\{s_j[i] = 1\}} &\le d &\text{ for all } j\in\{1,2,\dotsc,m\} \\
		x &\in \{0, 1\}^n
	\end{align*}
	One may add slack variables to turn the inequalities into equalities.
    \end{proof}
    \begin{lemma}\label{lem:string2}
        \Cref{thm:equivalences}, Statement~\ref{en:closest} implies Statement~\ref{en:bilp}
    \end{lemma}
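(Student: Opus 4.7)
The plan is to reduce \textsc{Binary ILP} with $A\in\{0,1\}^{m\times n}$ to $O(n)$ instances of \textsc{Closest String with Binary Alphabet}, each with $m'=2(m+1)$ strings. I would iterate over a guessed value $T\in\{0,\dots,n\}$ of $\sum_i x_i$ (an $O(n)$ outer loop) and augment the ILP with the extra row $\sum_i x_i = T$; write the resulting matrix as $\widetilde A\in\{0,1\}^{(m+1)\times n}$ with right-hand side $\widetilde b$, and let $c_j := \sum_i \widetilde A_{ji}$. For each row $j$ I build two strings $s_j^+, s_j^-$ of a common length; on the first $n$ ``variable'' positions I put $s_j^+[i] = 1-\widetilde A_{ji}$ and $s_j^-[i] = \widetilde A_{ji}$, so that writing $x_i := t[i]$ a direct Hamming-distance calculation gives
\[
d(t[1\ldots n], s_j^+) = (n-c_j)-T + 2\sum_i \widetilde A_{ji}x_i, \qquad d(t[1\ldots n], s_j^-) = c_j + T - 2\sum_i \widetilde A_{ji}x_i.
\]
Let $D_j^{\pm}$ denote these values when $\sum_i \widetilde A_{ji}x_i = \widetilde b_j$ and fix $d := 2n$, which bounds $\max_{j,\pm}D_j^{\pm}$. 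I then append $2(m+1)$ pairwise disjoint padding \emph{zones} $Z_{j,\pm}$, each of length $d-D_j^{\pm}$, and set $s_j^{\pm}$ equal to $1$ on its own zone $Z_{j,\pm}$ and $0$ on every other zone. The \textsc{Closest String} threshold is $d$.

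The forward direction is immediate: given an ILP solution $x$ with $\sum_i x_i = T$, the target $t$ with $t[i]=x_i$ on variable positions and $t=0$ on padding satisfies $d(t,s_j^{\pm}) = D_j^{\pm} + (d-D_j^{\pm}) = d$. The main obstacle is the converse: assuming $d(t,s_j^{\pm})\le d$ for all $j,\pm$, I must show that $t$ is zero on all padding positions, after which $x:=t[1\ldots n]$ automatically solves the ILP. Let $P$ be the padding positions with $t=1$. Expanding $d(t,s_j^{\pm})\le d$ and simplifying using $|Z_{j,\pm}|=d-D_j^{\pm}$ reduces each to
\[
\widetilde b_j + \tfrac{|P|}{2} - |P\cap Z_{j,-}| \;\le\; \sum_i \widetilde A_{ji}x_i \;\le\; \widetilde b_j + |P\cap Z_{j,+}| - \tfrac{|P|}{2},
\]
so $|P\cap Z_{j,+}|+|P\cap Z_{j,-}|\ge|P|$ for every $j$. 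The $2(m+1)$ zones are pairwise disjoint and together exhaust the padding, so summing this inequality over $j=1,\dots,m+1$ yields $(m+1)|P|\le |P|$, which forces $P=\emptyset$. The remaining inequalities then collapse to $\sum_i \widetilde A_{ji}x_i = \widetilde b_j$, so $x$ solves the augmented (hence original) ILP.

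Each constructed instance has $m'=2(m+1)=O(m)$ strings and length $L=O(mn)$, so the assumed $2^{O(m^{2-\eps})}\poly(n)$ algorithm for \textsc{Closest String with Binary Alphabet} solves it in time $2^{O(m^{2-\eps})}\poly(n)$; combined with the $O(n)$ outer loop over $T$ this matches the running time required by Statement~\ref{en:bilp}.
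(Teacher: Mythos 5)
Your proof is correct in substance but takes a genuinely different route from the paper's. The paper first doubles the variable set with complements $\bar x_1,\dots,\bar x_n$ to \emph{fix} $\sum_i x_i = n$, then rewrites the system with a $\{-1,1\}$ matrix so it becomes a \emph{non-uniform} Closest String instance (per-string bounds $d_j$), and finally adds two anchor strings $r_1,r_2$ (at mutual distance $4n$) whose presence forces any solution to be zero on the padding block, thereby converting the non-uniform bounds into a uniform bound $2n$. You instead guess $T=\sum_i x_i$ directly (an $O(n)$ outer loop replacing the variable doubling), build a pair of complementary strings $s_j^{\pm}$ per row, pad with $2(m+1)$ pairwise disjoint zones, and use a pigeonhole/counting argument (summing $|P|\le|P\cap Z_{j,+}|+|P\cap Z_{j,-}|$ over all $j$) in place of the anchor strings to force the padding to be all-zero. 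Both mechanisms are sound and give $m' = O(m)$ strings, which is all that is needed for the fine-grained equivalence.

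One detail worth tightening: in the converse direction the Hamming-distance formulas should use $T' := \sum_{i=1}^n t[i]$, which is not a priori equal to the guessed $T$ used to define $D_j^{\pm}$. If you carry the discrepancy through, each pair of inequalities becomes
\begin{equation*}
\widetilde b_j + \tfrac{T'-T}{2} + \tfrac{|P|}{2} - |P\cap Z_{j,-}| \;\le\; \sum_i \widetilde A_{ji}x_i \;\le\; \widetilde b_j + \tfrac{T'-T}{2} + |P\cap Z_{j,+}| - \tfrac{|P|}{2},
\end{equation*}
so the $(T'-T)/2$ terms cancel when you derive $|P|\le|P\cap Z_{j,+}|+|P\cap Z_{j,-}|$, and your summation still yields $P=\emptyset$. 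After that, you should explicitly invoke the all-ones row $j=m+1$ (for which $\sum_i \widetilde A_{ji}x_i = T'$ and $\widetilde b_{m+1}=T$) to conclude $T'=T$, and only then do the remaining constraints collapse to $\sum_i \widetilde A_{ji}x_i = \widetilde b_j$. As written, the argument silently assumes $T'=T$ in the displayed inequality; making the dependence on the augmented row explicit closes this small gap. You should also note that the construction tacitly assumes $D_j^{\pm}\ge 0$ (equivalently $\widetilde b_j\in[\max(0,c_j+T-n),\min(c_j,T)]$), rejecting the guess $T$ otherwise, so that every zone length $d-D_j^{\pm}$ is nonnegative.
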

    \begin{proof}
    We want to transform the following ILP
    \begin{align}
		A x &= b \label{eq:string-ilp} \\
		x &\in \{0, 1\}^n \notag
	\end{align}
    where $A\in \{0, 1\}^{m\times n}$, into an equivalent instance of \textsc{Closest String}.
	We first rewrite~\eqref{eq:string-ilp} into a more convenient form.
	\begin{claim}\label{cl:string-matrix}
        One can in polynomial time construct a matrix $C\in \{-1,1\}^{(2m+2) \times 2n}$ and some $c\in \mathbb Z^{2m+2}$ such that~\eqref{eq:string-ilp} is feasible if and only if there
		is a solution to
		\begin{align}
			C x &\le c \label{eq:string-ilp2} \\
			x &\in \{0, 1\}^{2n}. \notag
		\end{align}
		Furthermore, every feasible $x$ for~\eqref{eq:string-ilp2} has $x_1 + \cdots + x_{2n} = n$.
	\end{claim}
	This follows from simple transformations. We defer the proof until later and first
	show how the reduction follows from it.
	By comparing to the ILP formulation of \textsc{Closest String} we observe that~\eqref{eq:string-ilp2} corresponds to a ``non-uniform'' variant of
	    \textsc{Closest String}. It can be reformulated as: given strings $s_1,\dotsc,s_{2m+2}\in \{0,1\}^{2n}$ and bounds $d_1,\dotsc,d_{2m+2}\in\mathbb Z$, find a string $t \in \{0,1\}^{2n}$ such that for each $j = 1,\dotsc,m$ we have $d(t, s_j) \le d_j$. This follows from the ILP model given in the proof of \Cref{lem:string1}.
	    Furthermore, we have the guarantee that any solution has exactly $n$ ones.
	To transform this into a regular instance, we add two more strings $r_1, r_2$ and
	to each string $s_j$ we add $4n$ more characters, which makes a total of $6n$ characters per string.
	The strings of this instance $s'_1,\dotsc,s'_{2m+2},r_1,r_2$ are defined as
	\begin{center}
	\begin{tabular}{r c c c l}
		$r_1 = ($ & $0, \dotsc, 0$, & $0, \dotsc, 0,\ 0, \dotsc, 0$, & $0,
		\dotsc\dotsc \dotsc \dotsc ,0$ & $)$, \\
		$r_2 = ($ & $1, \dotsc, 1$, & $1, \dotsc, 1,\ 1, \dotsc, 1$, & $0,
		\dotsc\dotsc\dotsc\dotsc,0$ & $)$, \\
		$s_j' = ($ & $\underbrace{\quad s_j \quad}_{2n \text{ chars}}$, & $\underbrace{1, \dotsc, 1}_{n \text{ chars}},\ \underbrace{0, \dotsc, 0}_{n \text{ chars}}$, & \!\!\!\! $\underbrace{1, \dotsc, 1,}_{2n - d_j \text{ chars}} \underbrace{0, \dotsc, 0}_{d_j \text{ chars}}$ & $)$.
	\end{tabular}
	\end{center}
Here, we assume without loss of generality that $d_j\in \{0,\dotsc,2n\}$.
	We claim that there is a solution to this instance with maximum Hamming distance $2n$ if and only if there is a solution to the non-uniform instance.
	    \begin{claim}\label{cl:string1}
		    If there is a string $t'\in \{0,1\}^{6n}$ with distance at most $2n$ to $r_1, r_2$, and
		    $s'_j$, $j=1,2,\dotsc,2m+2$, then there is also a string $t\in \{0, 1\}^{2n}$ with
		    distance at most $d_j$ to $s_j$, $j=1,2,\dotsc,2m+2$.
	    \end{claim}
	    \begin{claim}\label{cl:string2}
		    If there is a string $t\in \{0, 1\}^{2n}$ with
		    distance at most $d_j$ to $s_j$, $j=1,2,\dotsc,2m+2$, then there is also a string 
		    $t'\in \{0,1\}^{6n}$ with distance at most $2n$ to $r_1, r_2$, and
		    $s'_j$, $j=1,2,\dotsc,2m+2$.
	    \end{claim}
	    From these claims, the lemma follows immediately.
    \end{proof}

	\begin{proof}[Proof of \Cref{cl:string-matrix}]
    We add variables $\bar x_1,\dotsc,\bar x_n$
	and force a solution to take exactly $n$ many ones
		\begin{align*}
		A x &= b \\
		x_1 + \cdots + x_n + \bar x_1 + \cdots + \bar x_n &= n \\
		x, \bar x &\in \{0, 1\}^n.
	\end{align*}
	Next, we change the equality constraints into inequalities and $A$ into a $-1,1$ matrix
	\begin{align*}
		A' x - {\bf 1} \bar x &\le b' \\
		-A' x + {\bf 1} \bar x &\le -b' \\
		x_1 + \cdots + x_n + \bar x_1 + \cdots + \bar x_n &\le n \\
		-x_1 - \cdots - x_n - \bar x_1 - \cdots - \bar x_n &\le - n \\
        x,\bar{x} &\in \{0, 1\}^n
	\end{align*}
	where $A' = 2A - {\bf 1}$ with $\bf 1$ being the all-ones $m\times n$ matrix
	and $b' = 2b - (n, \dotsc, n)\T$.
	\end{proof}

	\begin{proof}[Proof of \Cref{cl:string1}]
	Suppose there is a string $t'\in \{0,1\}^{6n}$ with distance at most $2n$ to each of the strings $s'_1,\dotsc,s'_{2m+2},r_1,r_2$.
    Because $d(r_1,r_2) = 4n$, string $t'$ must match $r_1$ and $r_2$ on characters where $r_1$ and
    $r_2$ match. More precisely, 
        $t'[i] = 0$ for $i \in \{4n,\ldots,6n\}$.
    Formally, this follows because of
	\begin{align*}
		4n &\ge d(t', r_1) + d(t', r_2) \\
		   &= d(t'[1\dotsc 4n], r_1[1\dotsc 4n]) + d(t'[1\dotsc 4n], r_2[1\dotsc 4n]) \\
		   &\quad + 2 d(t'[4n+1\dotsc 6n], (0,\dotsc,0)) \\
		   &\ge d(r_1[1\dotsc 4n], r_2[1\dotsc 4n]) + 2 d(t'[4n+1\dotsc 6n], (0,\dotsc,0)) \\
		   &\ge 4n + 2 d(t'[4n+1\dotsc 6n], (0,\dotsc,0)) .
	\end{align*}
        Let us now analyze the distance between $t := t'[1\dotsc 2n]$ and $s_j$ for some $j$.
        Since the last $2n$ characters of $t'$ are zero, we have $d(t'[4n+1\dotsc 6n], s'_j[4n+1\dotsc 6n]) = 2n - d_j$. Thus,
        \begin{align*}
            d(t, s_j) &\le d(t', s'_j) - d(t'[4n+1\dotsc 6n], s'_j[4n+1\dotsc 6n]) \\
            &\le 2n - (2n - d_j) = d_j .
        \end{align*}
        Thus, string $t$ is a solution for the non-uniform instance.
	\end{proof}

	\begin{proof}[Proof of \Cref{cl:string2}]
		Let $t \in \{0, 1\}^{2n}$ with $d(t[1\dotsc 2n], s_j) \le d_j$ for all $j$.
		We extend it to a string $t'\in \{0,1\}^{6n}$ by setting $t'[1\dotsc 2n] = t$,
		$t[2n+1\dotsc 3n] = (1,\dotsc,1)$, and
	$t[3n+1\dotsc 6n] = (0,\dotsc,0)$.
	Let us now verify that $t'$ has a distance at most $2n$ to each string.
	For $r_1, r_2$ note that $t[1\dotsc 2n]$ has exactly $n$ ones by guarantee of the
	non-uniform instance. Thus, the distance to $r_1$ and $r_2$ is exactly $2n$.
	Consider some $j\in\{1,\dotsc,2m+2\}$. Then
	\begin{equation*}
		d(s'_j, t) = d(s_j, t[1\dotsc 2n]) + 2n - d_j \le 2n . \qedhere
	\end{equation*}
    \end{proof}

	\subsection{Discrepancy Minimization}
	\label{sec:discrepancy}
    \begin{lemma}\label{lem:discr1}
        \Cref{thm:equivalences}, Statement~\ref{en:ilp} implies Statement~\ref{en:disc}
    \end{lemma}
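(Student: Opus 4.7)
The plan is to encode \textsc{Discrepancy Minimization} directly as an \textsc{Integer Linear Programming} instance with the same number of constraints $m$ and constant $\Delta$, and then invoke the hypothetical ILP algorithm.

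First I would pass to the decision version: given a threshold $d$, decide whether there is a coloring $\chi:U\to\{-1,1\}$ such that $|\sum_{u\in S_i}\chi(u)|\le d$ for every $i$. The minimization version follows from a binary search over $d\in\{0,1,\dotsc,n\}$ at the cost of an extra $\log n$ factor, which is absorbed into $\poly(n)$.

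For the decision version, I would introduce $0/1$-variables $x_u$ via the substitution $\chi(u) = 2x_u - 1$, so that $\sum_{u\in S_i}\chi(u) = 2\sum_{u\in S_i} x_u - |S_i|$. The discrepancy constraint $|\sum_{u\in S_i}\chi(u)|\le d$ then becomes
\begin{equation*}
    |S_i| - d \le 2\sum_{u\in S_i} x_u \le |S_i| + d,
\end{equation*}
which I would turn into an equality by adding a single slack variable $y_i$ per set with bounds $0\le y_i\le 2d$, obtaining
\begin{equation*}
    2\sum_{u\in S_i} x_u + y_i = |S_i| + d, \qquad i=1,\dotsc,m,
\end{equation*}
together with box constraints $0\le x_u\le 1$. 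This is an \textsc{Integer Linear Programming} instance with exactly $m$ constraints, $n+m$ variables, and entries in $\{0,1,2\}$, so $\Delta = O(1)$.

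Applying the hypothesized $2^{O(m^{2-\eps})}\poly(n)$ algorithm of Statement~\ref{en:ilp} to each of the $O(\log n)$ decision instances produced by the binary search yields the desired $2^{O(m^{2-\eps})}\poly(n)$ algorithm for \textsc{Discrepancy Minimization}. There is no genuine obstacle here; the only point that requires a moment's thought is the linearization $\chi(u)=2x_u-1$, which both removes the absolute value (via two-sided slacks) and keeps the coefficients in $\{0,1,2\}$ so that the $\Delta=O(1)$ hypothesis is applicable.
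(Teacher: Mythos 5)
Your proof is correct and follows essentially the same route as the paper: substitute $\chi(u)=2x_u-1$ to move to $0/1$-variables, convert the two-sided discrepancy condition into equality constraints with slacks and constant coefficients, and binary-search over $d$. The only cosmetic difference is that you handle the two-sided bound with a single bounded slack $y_i\in\{0,\dotsc,2d\}$, giving $m$ constraints, whereas the paper stacks $A$ and $-A$ to form a $2m$-row inequality system before adding slacks; both give $O(m)$ constraints and $\Delta=O(1)$, so the asymptotics are identical.
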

    \begin{proof}
	Let $d$ be a bound on the objective in the decision version of \textsc{Discrepancy Minimization}.
	Let $A$ be the incidence matrix of the given set system.
	Then the colorings of discrepancy at most $d$ are exactly the solutions of
	\begin{align*}
		\begin{pmatrix} A \\ -A \end{pmatrix} y 
			&\le \begin{pmatrix} d \\ d \\ \vdots \\ d \end{pmatrix} \\
		y &\in \{-1, 1\}^n.
	\end{align*}
	This can be equivalently formulated as
	\begin{align*}
		\begin{pmatrix} 2A \\ -2A \end{pmatrix} x
			&\le \begin{pmatrix} A \\ -A \end{pmatrix}\begin{pmatrix} 1 \\ 1 \\ \vdots \\ 1 \end{pmatrix} + \begin{pmatrix} 2d \\ 2d \\ \vdots \\ 2d \end{pmatrix} \\
			x &\in \{0, 1\}^n.
	\end{align*}
	where $x_i = (1 + y_i) / 2$. One may translate the inequalities into equalities by
	introducing slack variables.
	Therefore, an algorithm for \textsc{Integer Linear Programming} can be used to solve \textsc{Discrepancy Minimization}.
    \end{proof}
    \begin{lemma}\label{lem:discr2}
        \Cref{thm:equivalences}, Statement~\ref{en:disc} implies Statement~\ref{en:bilp}.
    \end{lemma}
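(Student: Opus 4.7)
The plan is to give a fine-grained reduction from Binary ILP with $A\in\{0,1\}^{m\times n}$ (and $n\le m^{O(m)}$) to Discrepancy Minimization, increasing the parameter from $m$ to $m'=O(m\cdot \log^{O(1)} m)$ while keeping the instance size polynomial in $n$. The first move is the substitution $x = (\mathbf{1}+\chi)/2$, which rewrites the Binary ILP as the $\pm 1$-equality system $\sum_{j\in T_i}\chi_j = c_i$ for every $i\in\{1,\dots,m\}$, where $T_i=\{j:A_{ij}=1\}$ and $c_i = 2b_i-|T_i|$. The central difficulty is that Discrepancy Minimization only enforces symmetric bounds of the form $|\sum_{u\in S_i}\chi(u)|\le d$, whereas the ILP asks for specific non-zero targets $c_i$ with $|c_i|$ possibly as large as $n = m^{O(m)}$.

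My plan for emulating these shifts is to attach an auxiliary ``shift gadget'' to each original set $T_i$, replacing it by $S_i = T_i\cup Y_i$ with the auxiliaries $Y_i$ designed so that in every feasible coloring the contribution $\sum_{y\in Y_i}\chi(y)$ is forced to equal $-c_i$; with uniform threshold $d=0$ the discrepancy constraint $\sum_{S_i}\chi=0$ then becomes the desired equality $\sum_{T_i}\chi = c_i$. Because Discrepancy Minimization is invariant under the global flip $\chi\mapsto -\chi$, auxiliary values can only be fixed \emph{relative} to a reference element $r$: I would use pair sets $\{r,y\}$ of discrepancy $0$ (which force $\chi(y)=-\chi(r)$) together with a dual gadget for the opposite sign. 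The reduction then yields either $\chi$ or $-\chi$ as an ILP solution depending on $\chi(r)$, and both are checked.

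To keep the number of sets polylogarithmic in $m$ I would avoid a ``unary'' shift gadget using $|c_i|$ fresh auxiliaries per constraint (which would create $\Theta(mn)$ sets and blow the budget) and instead encode each $-c_i$ in binary over $K = O(\log n) = O(m\log m)$ bits, with shared bit-registers $R_0,R_1,\dots,R_{K-1}$ of sizes $1,2,\dots,2^{K-1}$ across all constraints. Each register comes in a ``positive'' and ``negative'' copy to handle both signs of $c_i$. The set $S_i$ is then $T_i$ together with the union of those register copies whose index corresponds to a $1$-bit of $-c_i$. Provided every register $R_k$ is forced to have all of its $2^k$ elements taking a common $\pm 1$ value, $R_k$ contributes exactly $\pm 2^k$ to any $S_i$ containing it, and the sums add up to $-c_i$ as desired.

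The main obstacle I anticipate is enforcing the ``all elements equal'' condition on each register $R_k$ using only $O(\log^{O(1)} m)$ discrepancy sets per register: a naive pair-with-reference enforcement costs $\Theta(2^k)$ sets per register and sums to $\Theta(n)$ sets overall, violating the parameter budget. Crucially, discrepancy alone cannot pin down extremal sums on a single set, so the construction must propagate equality through the aux structure in a compact way — plausibly via a hierarchical gadget that uses a constant number of discrepancy-$0$ sets at each of the $O(\log|R_k|)$ levels of a balanced tree over $R_k$, combined with the sign-symmetry trick on a single reference variable. Designing and verifying this compact equality gadget is the technical heart of the lemma; once it is in place, correctness of the reduction and the parameter bound $m' = O(m\cdot \log^{O(1)} m)$ follow by routine case analysis, and combining with the hypothetical $2^{O(m'^{2-\eps})}\poly(n')$ Discrepancy algorithm yields the desired $2^{O(m^{2-\eps'})}\poly(n)$ algorithm for Binary ILP.
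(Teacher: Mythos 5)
Your high-level architecture matches the paper's: pass to a $\pm 1$ equality system $Ay=b'$, absorb the nonzero right-hand side into auxiliary elements whose signs are forced by discrepancy-zero sets, and accept that the reduction determines the solution only up to the global flip. But there is a genuine gap: the entire argument hinges on a compact ``all elements of this block have the same sign'' gadget with only $O(\log(\text{block size}))$ sets, and you explicitly leave this as a ``plausible'' hierarchical construction without building or verifying it. This is not a routine step that can be waved away --- single discrepancy-zero sets alone cannot pin down extremal sums, and a naive pairing with a reference element costs a set per element. The paper's Claim~\ref{cl:discr-construction} is precisely this gadget: it recursively doubles a complementary pair $(z,\bar z)$ of size $2^k$, adding two new discrepancy-zero rows per doubling step, so that after $k$ steps only the two symmetric solutions $z=\pm\mathbf{1},\ \bar z=\mp\mathbf{1}$ survive, using only $2k-1$ rows. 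Without constructing something equivalent, your reduction is incomplete.

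There is also a likely fatal accounting problem in the version you do sketch. You propose \emph{separate} bit-registers $R_0,\dotsc,R_{K-1}$ of sizes $1,2,\dotsc,2^{K-1}$ with $K=O(\log n)=O(m\log m)$, each enforced by its own hierarchical gadget. That costs $\sum_{k<K} O(\log 2^k) = O(K^2) = O(m^2\log^2 m)$ sets, which blows the parameter budget $m'=O(m\log^{O(1)}m)$ (and would make the downstream running time $2^{O(m^{4-\eps})}$ rather than $2^{O(m^{2-\eps'})}$). Notice also that the binary encoding of $-c_i$ buys you nothing: the number of auxiliary \emph{elements} is irrelevant (it only has to stay $\poly(n)$), and only the number of \emph{sets} counts. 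The paper instead uses a single shared complementary pair of registers of size $2^k$ with $k=\lceil\log_2 n\rceil$, enforces uniformity with one gadget of $2k-1=O(m\log m)$ sets, and encodes $b'_j$ \emph{in unary} by including $|b'_j|$ elements of the appropriately signed register in set $j$. Switching from your per-bit registers to one shared register with unary encoding, and then actually proving the doubling gadget, recovers the paper's proof.
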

    \begin{proof}
        Consider an ILP of the form
	\begin{align}
		A x &= b \label{eq:discr-ilp}\\
		x &\in \{0, 1\}^n \notag
	\end{align}
	for $A\in \{0, 1\}^{m\times n}$ and $n\le m^{O(m)}$.
	We will construct an instance of \textsc{Discrepancy Minimization} which has discrepancy zero if and only if the ILP has a feasible solution.
	Towards this, we first reformulate the ILP above as
	\begin{align}
		A y &= b' \label{eq:discr-ilp2} \\
		y &\in \{-1, 1\}^n \notag
	\end{align}
	where $b' = 2b - A \ (1,\dotsc,1)\T$.
	Note that $x$ is feasible for~\eqref{eq:discr-ilp} if and only if $y = 2x - (1, \dotsc, 1)\T$ is feasible for~\eqref{eq:discr-ilp2}. Also, if $b' = (0, \dotsc, 0)\T$, then~\eqref{eq:discr-ilp2} is already equivalent to an instance of \textsc{Discrepancy Minimization} that tests
	for discrepancy zero. To handle the general case, we transform it into
	an equivalent system with right-hand size $(0,\dotsc,0)\T$.
	We first construct a gadget of elements that have the same color.
	\begin{claim}\label{cl:discr-construction}
	For any $k\in\mathbb N$ we can construct a pair of matrices $B, \bar B\in
    \{0, 1\}^{(2k-1) \times 2^k}$ such that there are exactly two solutions to
	\begin{align*}
		B z + \bar B \bar z &= \begin{pmatrix} 0 \\ \vdots \\ 0 \end{pmatrix} \\ 
			z,\bar z &\in \{-1, 1\}^{2^k}
	\end{align*}
	namely
	$z = (1,\dotsc,1)\T, \bar z = (-1,\dotsc,-1)\T$ and $z = (-1,\dotsc,-1)\T, \bar z = (1,\dotsc,1)\T$.
	\end{claim}
	We will defer the proof to the end of the section.
	Using this gadget with $k = \lceil \log_2 n \rceil = O(m\log m)$ we now
    replace each coefficient $b'_j$ in the previous system by the variables from the gadget.
	Note that~\eqref{eq:discr-ilp2} is infeasible if $\| b' \|_{\infty} > n$.
    Thus assume without loss of generality that $\| b' \|_{\infty} \le n \le
    2^k$.
	Let $C, \bar C\in \{0, 1\}^{2^k \times m}$ be defined as follows.
	The $j$th row of $C$ has $b'_j$ many ones at arbitrary positions if $b'_j \ge 0$ and is all zero otherwise; contrary, the $j$th row of $\bar C$ has $-b'_j$ many ones at arbitrary positions if $b'_j < 0$ and is all zero otherwise.

	Now consider the system
	\begin{align}
		A y + C z + \bar C \bar z &= \begin{pmatrix} 0 \\ \vdots \\ 0 \end{pmatrix} \notag \\
			B z + \bar B \bar z &= \begin{pmatrix} 0 \\ \vdots \\ 0 \end{pmatrix} \label{eq:discr-instance} \\ 
			y &\in \{-1, 1\}^n \notag\\
			z,\bar z &\in \{-1, 1\}^{2^k}. \notag
	\end{align}
	We claim that~\eqref{eq:discr-instance} has a solution if and only if there is a solution to~\eqref{eq:discr-ilp2}. Let $y, z, \bar z$ be a solution to the former. Notice that the negation of a solution is also feasible.
	Due to \cref{cl:discr-construction} we may assume without loss of generality that $z = (-1,\dotsc,-1)\T$ and $\bar z = (1, \dotsc, 1)\T$, negating the solution if necessary.
	It follows that
	\begin{equation*}
		C z + \bar C \bar z = -b' .
	\end{equation*}
	Thus, $Ay = b'$, which concludes the first direction.
	For the other direction, assume that there is a solution $y$ to~\eqref{eq:discr-ilp2}.
	We set $z = (-1,\dotsc,-1)\T$, $\bar z = (1,\dotsc,1)\T$, which by \Cref{cl:discr-construction}
	satisfies $B z + \bar B \bar z = (0, \dotsc, 0)\T$. As before we have that $C z + \bar C \bar z = -b'$. Thus, $y, z, \bar z$ is a solution to~\eqref{eq:discr-instance}.
	This establishes the equivalence of the initial ILP instance to~\eqref{eq:discr-instance},
	which corresponds to an instance of \textsc{Discrepancy Minimization} where we test for discrepancy
	zero with $m' = O(m\log m)$ sets.
    \end{proof}

	\begin{proof}[Proof of \cref{cl:discr-construction}]
	The existence of such a matrix can be proven by induction: for $k = 1$, we simply take $B = \bar B = (1)$.
    Now suppose that we already have a pair of matrices $B, \bar B \in
    \{0,1\}^{(2k-1) \times 2^k}$ as above.
    Then we set
	\begin{equation*}
		B' = \begin{pmatrix}
			B & 0 \smallskip\\
			1 \cdots 1 & 0 \cdots 0 \\
			0 \cdots 0 & 1 \cdots 1
        \end{pmatrix}
        \text{ and }
		\bar B' = \begin{pmatrix}
			\bar B & 0 \smallskip\\
			0 \cdots 0 & 1 \cdots 1 \\
			1 \cdots 1 & 0 \cdots 0
		\end{pmatrix}\in \{0,1\}^{(2k+1) \times 2 \cdot 2^k}
		.
	\end{equation*}
		It can easily be checked that choosing either $z = z' = (1, \dotsc, 1)\T, \bar z = \bar z' = (-1,\dotsc,-1)\T$ or
		$z = z' = (-1, \dotsc, -1)\T, \bar z = \bar z' = (1,\dotsc,1)\T$ 
		satisfies
		\begin{equation}
			B' \begin{pmatrix} z \\ z' \end{pmatrix} + \bar B' \begin{pmatrix} \bar z \\ \bar z' \end{pmatrix} = \begin{pmatrix} 0 \\ \vdots \\ 0 \end{pmatrix} . \label{eq:discr-claim}
		\end{equation}
		Now take any $z,z',\bar z, \bar z'\in \{-1,1\}^{2^{k+1}}$ that satisfy~\eqref{eq:discr-claim}.
		Then we have that $B z + \bar B \bar z = (0,\dotsc,0)\T$. Hence by induction hypothesis
		either $z = (1,\dotsc,1)\T, \bar z = (-1,\dotsc,-1)\T$ or $z = (-1,\dotsc,-1)\T, \bar z = (1,\dotsc,1)\T$. Assume for now the first case holds. Then because of the second-to-last rows of $B, \bar B$ we
		have
		\begin{equation*}
			2^k + \sum_{i=1}^{2^k} \bar z'_i = \sum_{i=1}^{2^k} z_i + \bar z'_i = 0 .
		\end{equation*}
		Hence, $\bar z' = (-1,\dotsc,-1)\T = \bar z$. Similarly, the last row of $B, \bar B$ implies that $z' = (1,\dotsc,1)\T = z$.
		Analogously, if $z = (-1,\dotsc,-1)\T, \bar z = (1,\dotsc,1)\T$ then $z' = z$ and $\bar z' = \bar z$.
	\end{proof}

	\subsection{Set Multi-Cover}
	\label{sec:cover}
    \begin{lemma}\label{lem:cover1}
        \Cref{thm:equivalences}, Statement~\ref{en:ilp} implies Statement~\ref{en:cover}.
    \end{lemma}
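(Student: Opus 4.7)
The plan is to encode \textsc{Set Multi-Cover} directly as an \textsc{Integer Linear Programming} instance with $\Delta = O(1)$ and then invoke the algorithm from \Cref{en:ilp}. For each set $S_i$, introduce a variable $x_i$ with bounds $0 \le x_i \le 1$, intended to indicate whether $i \in I$. The coverage requirement $\sum_{i : v \in S_i} x_i \ge b$ at each universe element $v$ becomes the equality
\begin{equation*}
    \sum_{i : v \in S_i} x_i - s_v = b
\end{equation*}
after introducing a slack variable $s_v$ with $0 \le s_v \le n$. Since \Cref{en:ilp} concerns feasibility rather than optimization, I handle the minimization by guessing the target cardinality $k$, adding the single extra equation $\sum_{i=1}^n x_i = k$, and binary searching for the smallest feasible $k \in \{0, 1, \dotsc, n\}$; this calls the ILP solver only $O(\log n)$ times.

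The resulting ILP has $m' = m + 1$ constraints, $n + m$ variables, all coefficients in $\{-1, 0, 1\}$ (so $\Delta = 1$), and all variable bounds at most $n$. Applying the hypothetical algorithm yields total running time
\begin{equation*}
    O(\log n) \cdot 2^{O((m+1)^{2-\eps})} \cdot \poly(n + m) = 2^{O(m^{2-\eps})} \poly(n),
\end{equation*}
since any $\poly(m)$ factor is absorbed into $2^{O(m^{2-\eps})}$. The reduction is essentially syntactic and presents no serious obstacle: the only points to verify are that $m$ grows by just an additive constant, that $\Delta = O(1)$ is preserved after adding slack and cardinality constraints, and that feasibility of the constructed ILP at parameter $k$ is equivalent to the existence of a multi-cover of size exactly $k$, all of which are immediate.
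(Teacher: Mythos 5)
Your proof is correct and follows essentially the same route as the paper: encode the coverage constraints and a cardinality bound as an ILP with $m+1$ constraints, keep $\Delta = O(1)$ by using slack variables, and search over the cardinality bound (the paper states the decision version; you make the binary search explicit). The only cosmetic difference is that you fix $\sum_i x_i = k$ exactly while the paper uses $\sum_i x_i \le d$ with a slack; both yield the same constraint count and the monotonicity of feasibility in $k$ (for $k < n$) justifies your binary search.
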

    \begin{proof}

	An instance of the decision version of \textsc{Set Multi-Cover} with a bound
    $d$ on the cardinality can be formulated as
	\begin{align*}
		x_1 + \cdots + x_n &\le d \\
		A x &\ge \begin{pmatrix} b \\ \vdots \\ b \end{pmatrix} \\
		x &\in \{0, 1\}^n
	\end{align*}
	where $A\in \{0, 1\}^{m\times n}$ is the incidence matrix of the given set system.
	This can easily be translated to the form of (\ref{en:ilp}) by introducing slack variables.
	Notice that this ILP has $m + 1$ constraints. Thus, a faster algorithm for ILP
	would imply a faster algorithm for \textsc{Set Multi-Cover}.
    \end{proof}

	In the remainder, we show that the converse is also true.
    \begin{lemma}\label{lem:cover2}
        \Cref{thm:equivalences}, Statement~\ref{en:cover} implies Statement~\ref{en:bilp}.
    \end{lemma}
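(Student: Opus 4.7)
The plan is to reduce \textsc{Binary Integer Linear Programming} with $A \in \{0,1\}^{m\times n}$ to \textsc{Set Multi-Cover} with universe size $m' = O(m)$, so that a $2^{O(m^{2-\eps})}\poly(n)$ algorithm for SMC immediately yields one for Binary ILP. Since SMC has a single uniform coverage parameter while the ILP has a vector right-hand side $b$, I would guess $K = \sum_i x_i$ of a hypothetical ILP solution, looping over $K \in \{0, 1, \dotsc, n\}$ and skipping any $K$ for which some entry $b_j$ lies outside $\{0, \dotsc, K\}$. For each valid $K$ a separate SMC instance is constructed, and the overall answer is YES iff some $K$ yields a feasible SMC instance of cardinality at most $K(m+1)$.

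For a fixed $K$, the SMC universe consists of three copies $u^+_j, u^-_j, u^c_j$ of each row $j$, giving $m' = 3m$ elements. The $n$ ``main'' sets are $S_i = \{u^+_j : A_{ji} = 1\} \cup \{u^-_j : A_{ji} = 0\}$. In addition, for each row $j$ I would include $K - b_j$ copies of the ``dummy'' $T^+_j = \{u^+_j, u^c_j\}$ and $b_j$ copies of $T^-_j = \{u^-_j, u^c_j\}$. The coverage parameter is set to $K$ and the cardinality bound to $d = K(m+1)$.

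The correctness hinges on a sandwich argument. Let $x \in \{0,1\}^n$ be the indicator of selected main sets, $k = \sum_i x_i$, and $y^+_j, y^-_j$ the numbers of selected $T^+_j, T^-_j$ dummies. The availability bounds give $y^+_j + y^-_j \le K$ while the coverage constraint at the counter element $u^c_j$ gives $y^+_j + y^-_j \ge K$, so $y^+_j = K - b_j$ and $y^-_j = b_j$ are both forced. Substituting into the coverage constraints at $u^+_j$ and $u^-_j$ yields $(Ax)_j \ge b_j$ and $(Ax)_j \le b_j + (k - K)$ respectively. Finally, the cardinality constraint $k + mK \le K(m+1)$ gives $k \le K$, while the latter inequality forces $k \ge K$, so $k = K$ and $(Ax)_j = b_j$ hold for all $j$---a Binary ILP solution. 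The reverse direction is routine: any ILP solution with $\sum x_i = K$ together with all $mK$ dummies gives a feasible SMC solution of cardinality exactly $K(m+1)$.

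For the complexity, $m' = 3m$ and $n' \le n + mK = O(mn) \le m^{O(m)}$. A single SMC query thus takes $2^{O(m^{2-\eps})}\poly(n')$ time, and the outer loop over $K$ contributes a factor at most $n + 1 \le m^{O(m)}$. Since $m\log m = o(m^{2-\eps})$ for $\eps < 1$, all overheads are absorbed into the target bound $2^{O(m^{2-\eps})}\poly(n)$. The main technical obstacle I anticipate is reconciling SMC's uniform coverage parameter with the ILP's non-uniform right-hand side while simultaneously encoding equality rather than inequality; the counter element $u^c_j$ together with the balanced dummy multiplicities is the crux, since it uniquely pins down each $y^\pm_j$ and thereby jointly enforces $\sum x_i = K$ and $(Ax)_j = b_j$.
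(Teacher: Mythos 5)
Your proof is correct, and it takes a genuinely different route from the paper's. The paper first introduces $n$ auxiliary binary variables $\bar{x}$ together with a constraint $\sum x_i + \sum \bar{x}_i = n$ to fix the total number of ones, converts $Ax = b$ into the pair $Ax \ge b$ and $(\mathbf{1} - A)x + \mathbf{1}\bar{x} \ge (n,\dotsc,n)^{\mathsf T} - b$, and views the result as a \emph{non-uniform} Set Multi-Cover instance with $2m+1$ elements; a second reduction then converts the non-uniform demands to a uniform demand by adding one new element (contained in $n$ fresh sets that must all be taken) and distributing each old element over $n - b_v$ of those sets, giving $m' = 2m+2$. You instead guess the total $K = \sum x_i$ up front and, for each $K$, build a single uniform instance on $3m$ elements: $u^+_j$ and $u^-_j$ encode $(Ax)_j \ge b_j$ and $(Ax)_j \le b_j + (k-K)$ via the two complementary ``main'' coverings, while the counter element $u^c_j$ together with the limited multiplicities of $T^\pm_j$ and the global cardinality bound $K(m+1)$ forces the dummy counts to the exact values $K - b_j$ and $b_j$ and forces $k \le K$; equality $(Ax)_j = b_j$ then follows. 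The outer loop over $K \in \{0,\dotsc,n\}$ contributes only a polynomial factor in $n$, so the running time is preserved. Both approaches yield $m' = O(m)$ and both hinge on double-sided encodings of equality constraints; yours avoids the two-stage non-uniform-to-uniform detour at the cost of a guess, and is a bit more self-contained, while the paper's produces a single instance with a marginally smaller universe.
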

    \begin{proof}

    First, we reduce to
	a ``non-uniform'' version of \textsc{Set Multi-Cover} where each element $v$ has a different
	demand $b_v$.
	Let $A\in \{0,1\}^{m\times n}$ and $b\in\mathbb Z_{\ge 0}^m$ and consider the solutions to
	\begin{align*}
		Ax &= b \\
		x &\in \{0, 1\}^n.
	\end{align*}
	First, we add $n$ additional binary variables $\bar x$ and the requirement that exactly $n$ variables are equal to one, i.e.,
	\begin{align*}
		Ax &= b \\
		x_1 + \cdots + x_n + \bar x_1 + \cdots + \bar x_n &= n \\
		x &\in \{0, 1\}^n \\
		\bar x &\in \{0, 1\}^n.
	\end{align*}
	This system is equivalent to the previous one, by setting $n - x_1 - \cdots - x_n$ arbitrary variables $\bar x_i$ to $1$. Next, we transform the equality constraints $Ax = b$ into inequalities:
	\begin{align*}
		Ax &\ge b \\
		({\bf 1} - A) x + {\bf 1} \bar x &\ge \begin{pmatrix} n \\ \vdots \\ n \end{pmatrix} - b \\
		x_1 + \cdots + x_n + \bar x_1 + \cdots + \bar x_n &= n \\
		x &\in \{0, 1\}^n \\
		y &\in \{0, 1\}^n.
	\end{align*}
	Here, $\bf 1$ denotes the $n\times n$ all-ones matrix.
	Note that the second constraint is equivalent to $Ax \le b$, since we fixed the number of ones in
	the solution. This ILP is feasible if and only if the optimum of the following 
	ILP is $n$.
	\begin{align*}
		\min &\ x_1 + \cdots + x_n + \bar x_1 + \cdots + \bar x_n \\
		Ax &\ge b \\
		({\bf 1} - A) x + {\bf 1} \bar x &\ge \begin{pmatrix} n \\ \vdots \\ n \end{pmatrix} - b \\
		x_1 + \cdots + x_n + \bar x_1 + \cdots + \bar x_n &\ge n \\
		x &\in \{0, 1\}^n \\
		\bar x &\in \{0, 1\}^n.
	\end{align*}
	This ILP corresponds to an instance of non-uniform \textsc{Set Multi-Cover} with $2m+1$ elements.

	To reduce a non-uniform instance $S_1,\dotsc,S_n$, $(b_v)_{v\in U}$, to a uniform instance of \textsc{Set Multi-Cover}
	we proceed as follows: add one new element and $n$ many new sets to the instance. The coverage requirement of each element is $n$. The new element is contained in each of the new sets and in none of the old ones. Thus, each new set has to be taken. Furthermore, we add each old element $v$ to $n - b_v$ many arbitrary new sets.
    \end{proof}

	\subsection{Set Multi-Packing}
	\label{sec:packing}
    \begin{lemma}\label{lem:packing-cover}
        \Cref{thm:equivalences}, Statements~\ref{en:cover} and~\ref{en:packing} are equivalent.
    \end{lemma}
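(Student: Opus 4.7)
The plan is to exploit the elementary complement duality between covering and packing. For any index set $I \subseteq \{1,\dotsc,n\}$ with $|I| = k$ and any universe element $v$, setting $\bar S_i := U \setminus S_i$ one has
\[
|\{i \in I : v \in S_i\}| + |\{i \in I : v \in \bar S_i\}| = k,
\]
so the cover constraint ``$v$ lies in at least $b$ of the sets $S_i$ with $i \in I$'' is equivalent to the packing constraint ``$v$ lies in at most $k - b$ of the sets $\bar S_i$ with $i \in I$''. This identity will do essentially all of the work for both directions of the equivalence.

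For the reduction from \textsc{Set Multi-Cover} to \textsc{Set Multi-Packing}, I would iterate over all candidate cardinalities $k \in \{b, b+1, \dotsc, n\}$ (or binary-search over $k$) and, for each, invoke the packing algorithm on the complementary sets $\bar S_1, \dotsc, \bar S_n$ with bound $k - b$, asking whether a packing of cardinality at least $k$ exists. The minimum cover cardinality is then the smallest $k$ for which this query returns yes. Correctness relies on the observation that both problems may be restricted to cardinality exactly $k$ without loss of generality: a cover of cardinality smaller than $k$ can be padded with arbitrary extra sets, which only increases coverage counts, while a packing of cardinality larger than $k$ contains a subset of cardinality exactly $k$ that still satisfies the upper bounds, since removing sets only decreases counts. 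Once the cardinality is fixed to $k$, the identity above makes the two constraint systems literally equivalent.

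The reverse reduction is entirely symmetric: given a packing instance with sets $S_i$ and bound $b$, I guess $k \in \{b+1, \dotsc, n\}$ and invoke the cover algorithm on $\bar S_1, \dotsc, \bar S_n$ with demand $k - b$, asking for a cover of cardinality at most $k$; the maximum packing cardinality is the largest $k$ for which the answer is yes.

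Both reductions preserve the universe $U$ verbatim and hence preserve the parameter $m$ on the nose, introducing only a polynomial overhead in $n$ from iterating over $k$. Consequently any $2^{O(m^{2-\eps})}\poly(n)$ algorithm for one problem transfers to the other with the same exponent $\eps$. The only real thing to verify is the monotonicity argument that restricts both problems to cardinality exactly $k$ without loss of generality; beyond this and a brief sanity check of the degenerate regimes (e.g.\ $k < b$ trivially fails the cover, $b \ge n$ trivially satisfies the packing), the proof is mostly bookkeeping around the complementation identity.
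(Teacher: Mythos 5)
Your proof is correct, but it takes a genuinely different route from the paper's. The paper argues a one-shot bijection: it passes from a cover instance $(U,\{S_i\},b)$ to the packing instance $(U,\{\bar S_i\},n-b)$ with $\bar S_i = U\setminus S_i$, and claims that a set $I$ is cover-feasible if and only if the \emph{complementary} index set $\bar I = \{1,\dotsc,n\}\setminus I$ is packing-feasible, so that the two optima satisfy $k_{\text{pack}} = n - k_{\text{cov}}$. You instead keep the \emph{same} index set $I$, iterate over a target cardinality $k$, and pass to the instance $(U,\{\bar S_i\},k-b)$; the key algebraic fact is that for $|I|=k$ one has $|\{i\in I: v\in \bar S_i\}| = k - |\{i\in I: v\in S_i\}|$, so the packing constraint with threshold $k-b$ is \emph{literally} the cover constraint with threshold $b$, and the monotonicity of covering (under adding sets) and packing (under removing sets) lets you normalize both sides to cardinality exactly $k$. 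This costs a $\poly(n)$ outer loop, which is harmless for the $2^{O(m^{2-\eps})}\poly(n)$ target.

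In fact your approach appears to be the more careful one. The paper's claimed equivalence ``$I$ cover-feasible $\Leftrightarrow$ $\bar I$ packing-feasible'' does not seem to hold as stated: writing $\alpha_v = |\{i\in I: v\in S_i\}|$ and $d_v = |\{i: v\in S_i\}|$, the packing constraint on $\bar I$ unwinds to $\alpha_v \le |I| + d_v - b$, an \emph{upper} bound on $\alpha_v$ rather than the lower bound $\alpha_v\ge b$ required by coverage. (Concretely, take $U=\{1\}$, $S_1=S_2=\{1\}$, $b=1$: the cover optimum is $1$, but the packing instance has $\bar S_1=\bar S_2=\emptyset$, $\bar b = 1$, so every $I$ is packing-feasible and the optimum is $2\neq n-1$.) By replacing $\bar I$ with $I$ and tying the threshold to the cardinality $k$ rather than to $n$, your version sidesteps this: the cancellation happens within the \emph{same} index set, where $|I|=k$ is known, instead of between $I$ and its complement, where the per-element degree $d_v$ intrudes. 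So while both arguments are built on the same complementation idea and both preserve $m$ exactly, yours is the one that closes the gap cleanly.
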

    \begin{proof}
	Notice the following duality between \textsc{Set Multi-Cover} and \textsc{Set Multi-Packing}.
	Let $U = \{1,2,\dotsc,m\}$, $S_1,S_2,\dotsc,S_n$, and $b\in\mathbb N$ be an instance of \textsc{Set Multi-Cover}. Now consider the instance of \textsc{Set Multi-Packing} with universe $U$, set system $\bar S_1 = U\setminus S_1, \bar S_2 = U\setminus S_2, \dotsc, \bar S_n = U\setminus S_n$, and bounds $\bar b = n - b$. This is a bipartition between instances of \textsc{Set Multi-Cover} and \textsc{Set Multi-Packing}, i.e., it can be performed in both ways.

	For one pair of such instances, a solution $I$ for \textsc{Set Multi-Cover} is feasible if and only if $\bar I = \{1,2,\dotsc,n\}\setminus I$ is feasible for \textsc{Set Multi-Packing}. Thus, if the optimum of \textsc{Set Multi-Cover} is $k$, then the optimum of \textsc{Set Multi-Packing} is $n - k$.
    \end{proof}

	\subsection{Integer Linear Programming}
	\label{sec:binilp}

    In this section, we prove~\Cref{thm:large-entries}, i.e., we show how to
    reduce an ILP with large coefficients into a (larger) ILP with only $0/1$
    coefficients.  Furthermore, we show how to reduce an ILP with arbitrary
    upper and lower bounds into one with at most $(m\Delta)^{O(m)}$ binary
    variables.  Note that this implies that Statements~\ref{en:ilp}
    and~\ref{en:bilp} are equivalent and concludes the proof
    of~\cref{thm:equivalences}.

	\subsubsection{From large coefficients to zero-one}
	We will transform an ILP of the form
	\begin{align*}
		Ax &= b \\
		\ell \le x &\le u \\
		x &\in \mathbb Z^n
	\end{align*}
	where $A\in \{-\Delta,\dotsc,\Delta\}^{m\times n}$ into an equivalent
	one with $A'\in \{0, 1\}^{m'\times n'}$
	for $m' = O(m\log \Delta)$.
	Let $k = \lceil \log_2(\Delta) \rceil$.
	For the $j$th row of $A$ we introduce $4 (k + 1)$ 
	rows in $A'$, denoted by $r^+_0(j), {r'}^+_0(j), \dotsc, r^+_k(j), {r'}^+_k(j), r^-_0(j), {r'}^-_0(j), \dotsc, r^-_k(j), {r'}^-_k(j)$.
	Intuitively, the rows $r^+_i(j), {r'}^+_i(j)$ are symmetric rows that each stand for a value of $2^i$ in row $j$. Similarly, $r^-_i(j), {r'}^-_i(j)$ stand for a value of $-2^i$ in row $j$.
	The right-hand sides of the new ILP are $b_j$ for row $r^+_0(j)$ and zero for all other rows affiliated with $j$.

	For each column $A_i$ we derive a column of $A'$ as follows: for row $j$
	we consider the binary encoding of $A_{ji}$ and have the column in $A'$ use this
	binary encoding in $r^+_0(j), r^+_1(j), r^+_k(j)$ if $A_{ji} \ge 0$ and in
	$r^-_0(j), r^-_1(j), r^-_k(j)$ if $A_{ji} < 0$. All other entries of this column are zero.
	
	We now add auxiliary variables to shift the values from one power to another.
	For each row $j$ and each $i=0,\dotsc,k-1$ we add:
	\begin{itemize}[noitemsep]
		\item a variable with a one in row $r^+_i(j), {r'}^+_i(j)$ and $r^-_{i+1}(j)$ and
		\item a variable with a one in row $r^-_i(j), {r'}^-_i(j)$ and $r^+_{i+1}(j)$.
	\end{itemize}
	Furthermore, for each row $j$ and each $i = 0,\dotsc,k$ we add:
	\begin{itemize}\itemsep0pt
		\item a variable with a one in row ${r'}^+_i(j)$ and $r^-_i(j)$,
		\item a variable with a one in row ${r'}^-_i(j)$ and $r^+_i(j)$, and
		\item a variable with a one in row $r^-_i(j)$ and $r^+_i(j)$.
	\end{itemize}
	Note that each auxiliary variable does not change the total value for row $j$,
	i.e., the sum of $2^i$ times rows $r^+_i(j)$ and ${r'}^+_i(j)$ minus $2^i$ times $r^-_i(j)$ and ${r'}^-_i(j)$ over all $i$.
	Thus, any solution to the new ILP must form a solution to the original ILP.
	The converse is also true, since any value for one of the rows of $j$ can always be shifted to $r^+_0(j)$ via the auxiliary variables.

	\subsubsection{From bounded to binary variables}
	\label{sec:bounded-to-bin}
	Consider an ILP of the form
	\begin{align*}
		Ax &= b \\
		\ell \le x &\le u \\
		x &\in \mathbb Z^n
	\end{align*}
	where $A\in\{-\Delta,\dotsc,\Delta\}^{m\times n}$.
	We will first transform this into an equivalent ILP of the form
	\begin{align}
		A'x &= b' \label{eg:bin-ilp2} \\
		x &\in \{0, 1\}^{n'} \notag
	\end{align}
	where $A' \in \{-\Delta,\dotsc, \Delta\}^{m \times n'}$ for $n' \le (m\Delta)^{O(m)}$.
	Assume without loss of generality that each column of $A$ is different.
	Otherwise, we can merge identical columns together by adding the respective upper and lower
	bounds.
	This implies that $n\le (2\Delta+1)^m$.
	Next, we compute a vertex solution $x^*$ to the continuous relaxation $\{Ax = b, \ell \le x \le u, x\in\mathbb R^n\}$.
	The proximity bound by Eisenbrand and Weismantel~\cite{eisenbrand2019proximity} shows that there is an integer solution $z$ with
	$\|z - x^*\|_1 \le m (2m\Delta+1)^m$ if any integer solution exists.
	Thus, choosing $\ell'_i = \max\{\ell_i, \lceil x^*_i \rceil - (2\Delta+1)^m\}$
	and $u'_i = \min\{u_i, \lfloor x^*_i \rfloor + (2\Delta+1)^m\}$ for $i=1,2,\dotsc,n$,
	we can replace $\ell, u$ by $\ell', u'$ without affecting feasibility.
	By shifting the solution, we can reduce the lower bound to zero and
	we arrive at the following ILP that is equivalent to the original one.
	\begin{align*}
		Ax &= b - A\ell' \\
		x_i &\in \{0,1,\dotsc, u'_i - \ell'_i\} &\text{ for all } i=1,2,\dotsc,n
	\end{align*}
	Note that $u'_i - \ell'_i \le 2m (2m\Delta+1)^m$. Thus replacing each variable by $u'_i - \ell'_i$ binary variables we arrive at an ILP with $n' \le 2m (2m\Delta+1)^m \cdot n \le 2m (2m\Delta+1)^{2m}$ binary variables. 

	\subsection{$n$-Fold Integer Linear Programming}
	\label{sec:nfold}
    In this section, we prove~\cref{thm:nfold}. Consider the ILP
	\begin{align}
		Ax &= b \label{eq:nfold-start} \\
		x &\in \{0, 1\}^n \notag
	\end{align}
	with $A\in \{-2^m,\dotsc,2^m\}^{m\times n}$.
	We will show that this can be formulated as an equivalent \textsc{$n$-Fold Integer Linear Program}
	with parameter $m$ and $\Delta = O(1)$.
	The reduction follows a similar idea as one in~\cite{hunkenschroder2024tight},
	which derives a lower bound based on \textsc{Subset Sum}.
	Note that if~\eqref{eq:nfold-start} had arbitrary variables (not necessarily binary)
	then we can use the reduction of the previous section to transform it into a binary one.
    For $m \ge 3$, define
	\begin{equation*}
		B = \begin{pmatrix}
			1 & 0 &    & \cdots & 0 & 1 \\
            0 &  &    & \cdots & & 0 \\
			2 & -1 &     & \cdots & & \\
			  & 2  & -1 & \cdots & & \vdots \\
			  &    &    & \ddots & & \\
			  &    &    & 2 & -1 & 0
		  \end{pmatrix} \in \{-1,0,1,2\}^{m\times m}.
	\end{equation*}
	This matrix has the property that the system $Bx = (1,0,\dotsc,0)\T, x\in\mathbb Z_{\ge 0}^m$ has exactly two solutions, namely $x = (0,\dotsc,0,1)$ and $x = (1,2^1,2^2,\dotsc,2^{m-2},0)$.
	Our \textsc{$n$-Fold Integer Linear Program} has one block $A'_i, B'_i$ for each column $A_i$ of $A$.
	Matrix $B'_i$ is defined as $B$ above with right-hand side ${b'}^{(i)} = (1,0,\dotsc,0)\T$.
	Matrix $A'_i$ is derived from $A_i$ as follows: consider a coefficient $A_{ji}$.
	We rewrite
	\begin{equation*}
		A_{ji} = \lambda_0 \cdot 2^0 + \lambda_1 \cdot 2^1 + \cdots + \lambda_{m-2} \cdot 2^{m-2} \text{ with } \lambda\in \{-2,\dotsc,2\}^{m-1} .
	\end{equation*}
	Such a $\lambda$ exists since $|A_{ji}| \le 2^m$.
	Then we set the $j$th row of $A'_i$ as $(\lambda\T, 0)$. Let $x^{(i)}$ be the variables corresponding to block $A'_i, B'_i$. By choice of $B'_i, {b'}^{(i)}$ there are exactly two settings of $x^{(i)}$ that satisfy $B'_i x^{(i)} = {b'}^{(i)}$, namely $x^{(i)} = (0,\dotsc,0,1)$ and $x^{(i)} = (2^0,2^1,\dotsc,2^{m-2},0)$. Thus
	$A'_i x^{(i)}\in \{(0,\dotsc,0)\T, A_i\}$.
	Hence, the \textsc{$n$-Fold Integer Linear Program} with $b^{(0)} = b$ is equivalent to~\eqref{eq:nfold-start}.

\section{Algorithm for few distinct variables}~\label{sec:proof:ilp}
In this section, we prove~\cref{thm:ilp}. We assume without loss of generality that in \eqref{eq:ilp}
we have $\ell = (0, 0,\dotsc, 0)$. Note that otherwise, one may obtain an equivalent ILP with
$\ell = (0, 0,\dotsc, 0)$, $u' = u - \ell$, and $b' = b - A\ell$.
We first decompose each $u_i$ into a sum of powers of two with the properties
described in the following lemma. Such a construction is well-known in 
literature, see
e.g.~\cite{polak2021knapsack,chen2024faster,martello1990knapsack}.
\begin{lemma}[Cover]
    For every integer $k \in \mathbb N$ and $h \ge \lfloor \log{k}\rfloor$,
    there exist integers $c_0(k),\ldots,c_h(k) \in \{0,1,2\}$ such that
    \begin{displaymath}
        \left\{ \sum_{i=0}^{h} 2^i x_i : 0 \le x_i \le c_i(k) \text{ for
            all } 0 \le i \le h
        \right\} = \{0,1,\ldots,k\}
        .
    \end{displaymath}
    We call such a set $c_0(k),\ldots,c_h(k)$ a \emph{cover} of $k$. This set
    can be found in polynomial time in $h$.
\end{lemma}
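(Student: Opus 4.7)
My plan is to prove the lemma by strong induction on $k$, and the inductive construction will double as the polynomial-time algorithm. The base case $k=0$ is immediate: choose $c_i(0)=0$ for every $i$, so the reachable set is $\{0\}$. For $k\ge 1$ I would split into two cases according to the parity of $k$, in each case peeling off the coefficient $c_0(k)$ and recursing on a strictly smaller value using one fewer index.

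If $k$ is odd, set $c_0(k)=1$ and apply induction to $k':=(k-1)/2$, obtaining values $c_0(k'),\ldots,c_{h-1}(k')$; then define $c_i(k):=c_{i-1}(k')$ for $1\le i\le h$. The induction hypothesis implies that $\sum_{i=1}^{h} 2^i x_i = 2\sum_{i=0}^{h-1} 2^i x_{i+1}$ attains exactly the even integers in $\{0,2,\ldots,2k'\}$, and adjoining $x_0\in\{0,1\}$ fills in every value in $\{0,1,\ldots,2k'+1\}=\{0,\ldots,k\}$. If $k$ is even with $k\ge 2$, set $c_0(k)=2$ and apply induction to $k'':=k/2-1$ (noting $k''\ge 0$). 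Then $x_0\in\{0,1,2\}$ contributes exactly $\{0,1,2\}$ while the upper terms attain $\{0,2,\ldots,2k''\}$, so their sum ranges over $\{0,1,\ldots,2k''+2\}=\{0,\ldots,k\}$ exactly. The fact that we allow $c_0=2$ (rather than only $c_0\in\{0,1\}$) is precisely what makes the even case work, by letting three consecutive residues be reached from the $x_0$ term alone.

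To confirm that the index budget $h\ge\lfloor\log_2 k\rfloor$ is respected, I would observe that in both branches the recursive parameter satisfies $k',k''\le k/2$, hence $\lfloor\log_2 k'\rfloor\le\lfloor\log_2 k\rfloor-1\le h-1$ (and similarly for $k''$), so the induction hypothesis applies with one fewer index available. Consequently, the recursion terminates within $\lfloor\log_2 k\rfloor+1\le h+1$ invocations, yielding a $\poly(h)$-time algorithm. There is not really a hard step here; the main thing to verify carefully is the \emph{equality} (not mere inclusion) of the reachable set with $\{0,\ldots,k\}$ in the even branch, which follows cleanly from the identity $2k''+2=k$ once one has checked that the lower coefficient independently supplies all of $\{0,1,2\}$.
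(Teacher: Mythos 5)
Your proof is correct, and it takes a genuinely different route from the paper. You argue by strong induction on $k$: peel off $c_0(k)$ according to the parity of $k$ (setting $c_0=1$ and recursing on $(k-1)/2$ when $k$ is odd, or $c_0=2$ and recursing on $k/2-1$ when $k$ is even), and verify in each branch that adjoining the new low-order coefficient to the doubled reachable set fills in $\{0,\dotsc,k\}$ exactly, with the index budget $h\ge\lfloor\log_2 k\rfloor$ decreasing by one at each level since the recursive parameter is at most $k/2$. The paper instead gives a one-shot, non-recursive construction: it splits $k$ into a \emph{bottom} part $B=2^h-1$ (all ones up to bit $h-1$) and a \emph{top} part $T=k-B$, sets $c_i(k)=\bin_i(B)+\bin_i(T)$, and verifies coverage by a Minkowski-sum argument --- $S_B=\{0,\dotsc,2^h-1\}$ is a full interval, $S_T$ has no gap exceeding $2^h$, so $S_B\oplus S_T$ is a consecutive interval from $0$ to $B+T=k$. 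Your inductive argument is more elementary and handles the slack $h>\lfloor\log_2 k\rfloor$ cleanly (by implicitly padding with zero coefficients), whereas the paper's explicit formula is shorter to state and verify but is really only well-defined for the tight choice $h=\lfloor\log_2 k\rfloor$, since $T$ would otherwise be negative. Both are valid proofs of the same statement.
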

\begin{proof}
    For $n \in \mathbb{N}$, let $\bin_i(n) \in \{0,1\}$ denote the $i$th bit of the binary
    representation of $n$. Let $B := 2^h - 1$ be the \emph{bottom} bits, and let $T
    = k - B$ represent the \emph{top} bits of $k$.
    Now, consider the sets
    \begin{align*}
	    S_B &:= \left\{ \sum_{i=0}^{h-1} 2^i x_i : 0\le x_i \le \bin_i(B) \text{ for all } 0\le i \le h-1 \right\} \text{ and} \\
	    S_T &:=
        \left\{ \sum_{i=0}^{h} 2^i x_i : 0\le x_i \le \bin_i(T) \text{ for all } 0\le i \le h \right\}.
    \end{align*}
    We will set $c_i(k) := \bin_i(B) + \bin_i(T) \in
    \{0,1,2\}$. Showing that these numbers are a cover of $k$ is equivalent to
    \begin{displaymath}
        S_B \oplus S_T := \{ b + t \mid b \in S_B, t \in S_T\} = \{0,1,\ldots,k\}.
    \end{displaymath}
    First, observe that $S_B = \{0,1,\ldots,2^{h} - 1\}$. Moreover, there is
    no gap of length at least $2^h$ in the set $S_T$, that is, for every nonzero $a \in S_T$,
    there exists a smaller $b \in S_T$ such that $a-b < 2^h$. This $b$ can be
    constructed by flipping an arbitrary bit of $a$ to $0$. Therefore, $S_B \oplus S_T$ consists of
    consecutive numbers. The smallest number in $S_B \oplus S_T$ is clearly $0$,
    and the largest is $B+T = k$.
    \end{proof}
We set $h = \lfloor \log_2(\lVert u \rVert_\infty) \rfloor$. Using the lemma above, for each $i$, we decompose
$u_i = c_0(u_i) \cdot 2^0 + c_1(u_i) \cdot 2^1 + \cdots + c_h(u_i) \cdot 2^{h}$ with its cover.
We can now naturally rewrite \eqref{eq:ilp} as
\begin{align*}
	\max \sum_{i=0}^h &2^i c\T z^{(i)} \\
    \sum_{i=0}^h 2^i Az^{(i)} &= b\\
	0 \le z^{(k)}_i &\le c_k(u_i) &k \in \{0,\ldots,h\} \\
    z^{(0)},\ldots,z^{(h)} &\in \mathbb Z^n \ . 
\end{align*}
Each $z^{(j)}$ produces a partial solution for some unknown right-hand side
$b^{(j)} = 2^j A z^{(j)}$. We write $b^{(\le j)} = 2^0 Az^{(0)} + 2^1 Az^{(1)} + \cdots + 2^j Az^{(j)}$.

Our approach is now to compute, for $j=0,1,\dotsc,h$ and for every potential
value of $b^{(\le j)}$, a solution $z^{(0)}, z^{(1)},\dotsc, z^{(j)}$.  To do
this, we first reduce the search space of relevant vectors $b^{(\le j)}$.

\begin{lemma}
    Consider $A \in \mathbb{Z}^{m\times n}$ and $b \in \mathbb{Z}^m$ with
    $\norm{A}_\infty \le \Delta$.
    Suppose that $y = y^{(0)} \cdot 2^0 + y^{(1)} \cdot 2^1 + \cdots +
    y^{(h)} \cdot 2^h$ satisfies $A y = b$ and $0 \le y^{(j)} \le u^{(j)}$
    with $u^{(j)} \in \{0,1,2\}^n$ for each $j \in \{0,\ldots,\ell\}$. Then for $b^{(\le j)} = 2^0 Ay^{(0)} + 2^1 Ay^{(1)} + \cdots + 2^j Ay^{(j)}$, we have
    \begin{equation*}
        b^{(\le j)} \equiv b \mod 2^{j+1}
    \end{equation*}
    and furthermore
    \begin{equation*}
        b^{(\le j)} \in \{- 2^{j+2} mn\Delta, \dotsc, 2^{j+2} mn\Delta\}^m \ .
    \end{equation*}
\end{lemma}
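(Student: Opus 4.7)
The proof should be a direct computation in two independent parts, one for the congruence and one for the size bound. Both follow from expanding the definition of $b^{(\le j)}$ and using the coarse bounds on $y^{(k)}$ and $A$.

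For the congruence, the plan is to observe that
\begin{equation*}
b - b^{(\le j)} \;=\; \sum_{k=j+1}^{h} 2^k A y^{(k)} ,
\end{equation*}
and every term on the right has a factor of $2^{k}$ with $k \ge j+1$. Hence the right-hand side is entrywise divisible by $2^{j+1}$, which is exactly the desired congruence $b^{(\le j)} \equiv b \pmod{2^{j+1}}$.

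For the size bound, the plan is to estimate $\|b^{(\le j)}\|_\infty$ term by term. Since $y^{(k)} \in \{0,1,2\}^n$, each coordinate of $A y^{(k)}$ is bounded in absolute value by $2 n \Delta$. Therefore
\begin{equation*}
\|b^{(\le j)}\|_\infty \;\le\; \sum_{k=0}^{j} 2^k \cdot 2 n \Delta \;=\; 2 n \Delta \cdot (2^{j+1} - 1) \;\le\; 2^{j+2} n \Delta \;\le\; 2^{j+2} m n \Delta ,
\end{equation*}
which gives the claimed coordinate-wise containment in $\{-2^{j+2} m n \Delta, \dotsc, 2^{j+2} m n \Delta\}^m$.

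There is no real obstacle; both statements are essentially definitional. The only thing to be slightly careful about is that the bound $\|y^{(k)}\|_\infty \le 2$ (not $\le 1$) is what produces the factor of $2$ in $\|A y^{(k)}\|_\infty \le 2 n \Delta$, and that the geometric sum telescopes cleanly to give the $2^{j+2}$ factor. The factor $m$ in the statement is slack, presumably kept so the bound dovetails neatly with later accounting.
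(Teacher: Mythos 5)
Your proof is correct and follows the same two-step structure as the paper's: the congruence from the divisibility of $b - b^{(\le j)}$ by $2^{j+1}$, and the size bound from $\|Ay^{(k)}\|_\infty \le 2n\Delta$ followed by summing the geometric series. Your observation that the extra factor of $m$ is slack is also accurate.
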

\begin{proof}
    The first property follows from the observation that
    $b^{(>j)} := b - b^{(\le j)}$ is a vector with each component being a multiple of $2^{j+1}$.
    Hence, $b^{(\le j)} \equiv b - b^{(>j)} \equiv b \mod 2^{j+1}$.

    For the second property, observe that $\lVert y^{(h)} \rVert_1 \le \lVert
    u^{(h)} \rVert_1 \le 2n$ for each $h$. Therefore,
    $\lVert b^{(\le j)} \rVert_\infty \le \sum_{h=0}^j 2^{h+1}n\Delta \le 2^{j+2} nm\Delta$.
\end{proof}
We say that a vector $b' \in \mathbb{Z}^m$ is \emph{relevant} for $j$ if $b' \equiv b \mod 2^{j+1}$ and $\lVert b' \rVert_\infty \le 2^{j+2} mn\Delta$.
Clearly, the following holds:

\begin{claim}\label{obs:size-rel}
    For every $j$, the number of relevant vectors for $j$ is $O(mn\Delta)^m$.
\end{claim}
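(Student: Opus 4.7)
The plan is to count, coordinate by coordinate, the integers in the prescribed interval that satisfy the modular constraint and then take the product over the $m$ coordinates.

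First, I would fix a coordinate $i \in \{1,\dotsc,m\}$ and count the number of integers $b'_i \in \{-2^{j+2}mn\Delta,\dotsc, 2^{j+2}mn\Delta\}$ with $b'_i \equiv b_i \pmod{2^{j+1}}$. The interval has length $2 \cdot 2^{j+2}mn\Delta + 1$, and within any interval of that length the number of integers in a fixed residue class modulo $2^{j+1}$ is at most
\begin{equation*}
    \left\lceil \frac{2 \cdot 2^{j+2}mn\Delta + 1}{2^{j+1}} \right\rceil
    = O(mn\Delta),
\end{equation*}
since the factor $2^{j+1}$ cancels against $2^{j+2}$ up to a constant and the residual $+1$ contributes at most a constant.

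Then, since the relevance condition decouples across coordinates (both the modular condition $b' \equiv b \pmod{2^{j+1}}$ and the box condition $\lVert b' \rVert_\infty \le 2^{j+2}mn\Delta$ are coordinate-wise), the total number of relevant vectors is the product of the per-coordinate counts, which is $O(mn\Delta)^m$. This is the whole argument; there is no real obstacle beyond bookkeeping the constants in the per-coordinate count.
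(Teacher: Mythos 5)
Your proof is correct and is the natural argument the paper has in mind: the paper in fact gives no explicit proof of this claim (it is prefaced only with ``Clearly, the following holds:''), and your coordinate-wise count $\lceil (2\cdot 2^{j+2}mn\Delta+1)/2^{j+1}\rceil = 4mn\Delta + O(1) = O(mn\Delta)$ followed by taking the $m$-fold product is exactly what is expected.
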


We will now iteratively compute solutions for all vectors relevant for $j+1$ from the solutions for all relevant vectors for $j$ with $j \geq 0$.
Note that this will immediately establish~\cref{thm:ilp}.

\begin{lemma}\label{lem:iteration}
    Let $\mathcal{S}_j$ be a set of optimal solutions to~\eqref{eq:ilp} for all
    $b' \in \mathbb{Z}^m$ that are relevant for $j$. Then, in time $n^{m+1} \cdot
    O(m\Delta)^m$, we can compute a set $\mathcal{S}_{j+1}$ of optimal solutions for all $b'' \in \mathbb{Z}^m$ relevant for $j+1$.
\end{lemma}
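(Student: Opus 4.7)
My plan is to carry out the update from $\mathcal{S}_j$ to $\mathcal{S}_{j+1}$ by a dynamic program that introduces the $n$ new variables $z^{(j+1)}_1, z^{(j+1)}_2, \ldots, z^{(j+1)}_n$ one coordinate at a time. I first enlarge the index set to
\[
R \;=\; \{\, b' \in \mathbb Z^m : b' \equiv b \!\!\!\pmod{2^{j+1}},\ \lVert b' \rVert_\infty \le 2^{j+3} m n \Delta\,\},
\]
which by the counting in \cref{obs:size-rel} still has size $O(mn\Delta)^m$, and which contains every vector relevant for $j$ as well as every vector relevant for $j+1$ (the latter because $b'' \equiv b \pmod{2^{j+2}}$ implies $b'' \equiv b \pmod{2^{j+1}}$, while the magnitude bound is already loose enough). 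I then initialize a table $D : R \to \mathbb R \cup \{-\infty\}$ by setting $D[b']$ to the value of the solution $\mathcal{S}_j[b']$ if $b'$ is relevant for $j$ and $D[b'] = -\infty$ otherwise, storing backtracking pointers so that actual solutions can be reconstructed.

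The DP then performs, for $i = 1, 2, \ldots, n$, the update
\[
D_{\text{new}}[b'] \;\gets\; \max_{t \in \{0, 1, \ldots, c_{j+1}(u_i)\}} \bigl( D[b' - 2^{j+1} t A_i] + 2^{j+1} t c_i \bigr)
\]
for every $b' \in R$ (reading out-of-$R$ entries as $-\infty$), and overwrites $D$ with $D_{\text{new}}$. After the $n$-th round I extract $\mathcal{S}_{j+1}[b'']$ from $D[b'']$ for every $b''$ relevant for $j+1$. Correctness follows because any feasible completion $(z^{(0)}, \ldots, z^{(j+1)})$ with $b^{(\le j+1)} = b''$ decomposes into a prefix producing some $b^{(\le j)}$ and a choice of $z^{(j+1)}$; the prefix magnitude is bounded by $\lVert b^{(\le j)} \rVert_\infty \le \sum_{i=0}^j 2^{i+1} n \Delta \le 2^{j+2} n \Delta \le 2^{j+2} m n \Delta$, so $b^{(\le j)}$ is automatically relevant for $j$ and its optimum is already in $\mathcal{S}_j$. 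The range $R$ is chosen large enough to contain every intermediate state arising during the $n$ rounds because the contribution of a partial $z^{(j+1)}$ has magnitude at most $2^{j+1} \cdot 2n\Delta = 2^{j+2} n \Delta$ per coordinate, and the congruence $\equiv b \pmod{2^{j+1}}$ is preserved since every update adds an integer multiple of $2^{j+1}$.

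For the running time, each of the $n$ rounds visits every $b' \in R$ once and takes a maximum over at most three values of $t$, giving $O(|R|) = O(mn\Delta)^m$ work per round. Summing over the $n$ rounds yields a total of $n \cdot O(mn\Delta)^m = n^{m+1} \cdot O(m\Delta)^m$ operations, which matches the bound stated in the lemma. The only thing that needs careful verification is the bookkeeping of the congruence and magnitude constraints that define $R$: one has to check simultaneously that the seed states from $\mathcal{S}_j$, every intermediate DP state, and every target state $b''$ relevant for $j+1$ all lie in $R$, which is exactly the argument sketched above.
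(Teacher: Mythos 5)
Your proposal is correct and follows essentially the same approach as the paper: a layered dynamic program over the $n$ coordinates of $z^{(j+1)}$, where the DP state is the current partial right-hand side modulo a magnitude and congruence constraint. The paper phrases this as a longest-path computation in a layered DAG with vertex sets $\{s\} \cup V^{(0)} \cup \dotsc \cup V^{(n)}$ and you phrase it as a direct table update, but these are interchangeable presentations of the same algorithm. If anything, your write-up is a bit more careful in two places: you explicitly enlarge the table's index set $R$ (with the $2^{j+3}mn\Delta$ magnitude bound) so that it simultaneously contains all seeds relevant for $j$, all intermediate partial states, and all targets relevant for $j+1$, and you consistently use the scaling factor $2^{j+1}$ for the contribution of $z^{(j+1)}$ to both the right-hand side and the objective, which matches the decomposition $b^{(\le j+1)} = b^{(\le j)} + 2^{j+1}Az^{(j+1)}$. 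Both of these points are handled somewhat loosely in the paper's own proof, and neither affects the asymptotic bound.
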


\begin{proof}
Let $V$ be the set of all vectors $b' \in \mathbb Z^m$ with $b'
\equiv b \mod 2^{j+1}$ and $\lVert b' \rVert_\infty \le 2^{j+2} mn\Delta$.

We define an edge-weighted directed acyclic graph with vertices $\{s\} \cup V^{(0)}
\cup V^{(1)} \cup \dotsc \cup V^{(n)}$, where $V^{(0)},V^{(1)},\dotsc,V^{(n)}$ are
$n+1$ copies of $V$, and $s$ is a distinguished source vertex.

If $j \geq 0$, there is an edge from $s$ to every vertex $v_{b'} \in V^{(0)}$ such that
	the vector $v_{b'}$ corresponds to a relevant vector $b' \in \mathbb{Z}^m$ for $j$ for which~\eqref{eq:ilp} has a solution $x_{b'}$. This edge indicates feasibility, and the weight of the edge from $s$ to $v_{b'}$ is the value $c\T x_{b'}$ for the optimal solution $x_{b'}$.
In the base case where $j < 0$, there is exactly one edge of weight $0$ to the vertex
in $V^{(0)}$ that corresponds to the all-zero vector.

For each vertex in $V^{(i-1)}$ for $0 < i \le n$, there is an edge to the corresponding vertex in $V^{(i)}$ with weight zero. Further, if 
$c_{j+1}(u_i) \ge 1$, for every vertex corresponding to $b'$ in $V^{(i-1)}$,
there is an edge to the vertex in $V^{(i)}$ that corresponds to $b' + 2^j A_i$
(if it exists). The weight of this edge is $2^j c_i$. 

Similarly, if $c_{j+1}(u_i) \ge 2$, then for every vertex in $V^{(i-1)}$
that corresponds to a relevant $b'$, there is an edge to a vertex in $V^{(i)}$
that corresponds to $b' + 2^{j+1} A_i$ (if it exists), with a cost of $2^{j+1}c_i$.

Finally, we compute the longest path from $s$ to each vertex in $V^{(n)}$, and we
store these as the values of the solutions to all the relevant right-hand sides for $j+1$.

For the running time, observe that by~\cref{obs:size-rel}, we have $|V| \le
O(mn\Delta)^m$. Hence, the graph has $n^{m+1} \cdot O(m\Delta)^m$ vertices
and edges. Thus, the longest path problem can be solved in time $n^{m+1}
\cdot O(m\Delta)^m$. 

For correctness, consider a path in the graph from vertex $v_{b_1} \in
V^{(0)}$ to vertex $v_{b_2} \in V^{(n)}$ (corresponding to vectors $b_1$ and $b_2$
respectively). The edges of this path define a vector $z^{(j+1)}$ such that $0 \le
z^{(j+1)} \le c_{j+1}(u)$. Moreover, by construction, it holds that $2^j A
z^{(j+1)} + b_1 = b_2$. Finally, the weight of this path corresponds to the value
$2^j c\T z^{(j+1)}$.
\end{proof}

\bibliographystyle{plain}
\bibliography{references}
\end{document}